\@date \else {\vskip3ex \centering\footnotesize\@date\par\vskip1ex}\fi
\else \@footnotetext{\@setdate}\fi}
\newcounter{rownumber}[figure] 
\numberwithin{equation}{section}
\theoremstyle{plain}%     
\newtheorem{prop}{Proposition}[]
\newtheorem{thm}{Theorem}[]
\newtheorem{lem}{Lemma}[]
\newcounter{example}
\newcommand{\1}{{\textrm{1} \kern -.41em \textrm{1} }}
\newcommand\E{\mathbb{E}}
\renewcommand\P{\mathbb{P}}
\newcommand\NP{\mathbb{N}\mathbb{P}}
\begin{document}
\title[Reconstructing binary matrices]{Reconstructing binary matrices under window constraints from their row and column sums}
\author{Andreas Alpers and Peter Gritzmann}
\address{Zentrum Mathematik, Technische Universit\"at M\"unchen, D-85747 Garching bei M\"unchen, Germany}
\email{alpers@ma.tum.de, gritzman@tum.de}
\thanks{The authors gratefully acknowledge support through the German Research Foundation Grant GR 993/10-2 and the European COST Network MP1207.}
%\thanks{NOTICE: This is the authors version of a work that was accepted for publication in \textbf{Fundamenta Informaticae}. Changes resulting from the publishing process, such as peer review, editing, corrections, structural formatting, and other quality control mechanisms may not be reflected in this document. Changes may have been made to this work since it was submitted for publication. A definitive version appears in Fundamenta Informaticae.}

%\date{\today, \textsc{File:} \texttt{\jobname.pdf}}
\begin{abstract}
The present paper deals with the discrete inverse problem of reconstructing binary matrices from their row and column sums under additional constraints on the number and pattern of entries in specified minors. While the classical consistency and reconstruction problems for two directions in discrete tomography can be solved in polynomial time, it turns out that these window constraints cause various unexpected complexity jumps back and forth from polynomial-time solvability to $\mathbb{N}\mathbb{P}$-hardness.
\end{abstract}

\maketitle
\section{Introduction} 
The problem of reconstructing binary matrices from their row and column sums  is a classical inverse task in combinatorics; see \cite{ryser57}.
Even though the term was introduced much later it can be seen as a root of {\em discrete tomography}; see \cite{sksbko-93, ksbsko-95, gg97, gritzmann97,batenburgnature}, and the collected editions \cite{kubaherman1}, \cite{kubaherman2}.
Of particular relevance for the present paper are two well-known observations: (1) The question of consistency of the data i.e., the question whether there exists a matrix whose row and column sums coincide with the given data, can be solved in polynomial time.
(2) Typically, its row and column sums do not determine the underlying matrix uniquely.  See  \cite{ryser57, fishburn91, aharoni97, glw11}  for characterizations of the `rare' cases of uniqueness.

In the present paper we address the second issue by adding additional window constraints, specifying (or giving bounds on) how many points there are in certain minors  of the matrix. These constraints come up naturally in {\em dynamic discrete tomography}
\cite{agdynamic}. An application of particular relevance in physics is that of particle tracking; see \cite{agms-15}, \cite{glidingarc-15}. Here, the positions of particles over time are to be reconstructed from two (sometimes more) high speed camera images. Window constraints are a natural way of modeling additional physical information for instance on the speed of the particles; see \cite{agdynamic} and the background literature given there for further information. 

Here we are taking a complexity theoretical view commemorating Observation (1). In fact, when adding various kinds of window constraints, we are interested in the boundary between polynomial-time solvability and $\NP$-hardness. Intuitively speaking, we ask which kinds of additional constraints can be added without imposing a significant extra computational effort. Or, phrased differently, what is the computational price to pay for reducing the number of solutions by utilizing additional window information. 
We will focus on the effect of three different parameters: $k$ corresponds to the {\em size} of the window, $\nu$ is the {\em number of $1$'s} in the nonzero minors, and $t$ specifies the  allowed positions of $1$'s in the windows, referred to as the {\em pattern} of the window. The choice for these parameters will specify the given problem \textsc{Rec}$(k,\nu,t),$ which is formally introduced in Section~\ref{sec-results}. Hence $k,$ $\nu,$ and $t$ are given beforehand (i.e., are not part of the input). As it will turn out, the problem exhibits various unexpected complexity jumps.

Omitting technical details (which are all given in Section \ref{sec-results}) some of these jumps can be summarized as follows; see Table \ref{table:sumtable1}. 

For $k=1$ the problems are in $\P$ regardless on how the other parameters are set; see Theorem~\ref{thm:tract}(\ref{thm:tract:i}). For $k\ge 2$ it depends on $\nu$ and $t$ whether the problems are in $\P$ or $\NP$-hard; see
Theorems~\ref{thm:tract}(\ref{thm:tract:ii}),(\ref{thm:tract:iii}) and~\ref{thm:NP}. For $k\ge 2$, some values of $\nu$ render the problems tractable while others make them $\NP$-hard; see Theorems~\ref{thm:tract}(\ref{thm:tract:ii}) and~\ref{thm:NP}(\ref{thm:NP:ii}) even if the patterns are not restricted at all. Adding a pattern constraint may turn an otherwise $\NP$-hard problem into a polynomial time solvable problem; see Theorems~\ref{thm:NP}(\ref{thm:NP:ii}) and~\ref{thm:tract}(\ref{thm:tract:iii}). The reverse complexity jump, however, can also be observed; see Theorems~\ref{thm:tract}(\ref{thm:tract:ii}) and~\ref{thm:NP}(\ref{thm:NP:i}). 

The present paper is organized as follows: Section \ref{sec-results} will introduce our notation and state the main results. The proofs of our tractability results are given in Section \ref{sec-tractability} while the $\NP$-hardness results are proved in Section \ref{sec-intractability}. Section~\ref{sect:finalremarks} contains some final remarks. In particular, it provides a (potentially also quite interesting) extension of our main problem and collects related implications of our main results.

\section{Notation and Main Results}\label{sec-results}
We begin with some standard notation.

Let $\mathbb{Z},$ $\mathbb{N},$ and $\mathbb{N}_0$ denote the set of integers, natural numbers, and non-negative integers, respectively. For $k \in \mathbb{N}$ set $k\mathbb{N}_0:=\{ki:i\in \mathbb{N}_0\}$, $k\mathbb{N}_0+1:=\{ki+1:i\in\mathbb{N}_0\},$  $[k]:=\{1,\dots,k\}$, and $[k]_0:=[k]\cup\{0\}.$ 
The support of a vector $x:=(\xi_1,\dots,\xi_d)^T \in \mathbb{Z}^d$ is defined as  $\textnormal{supp}(x):=\{i:\xi_i\neq0\}$. 
With $\1$ we denote the all-ones vector of the corresponding dimension. The cardinality of a finite set $F\subseteq \mathbb{Z}^d$ is denoted by~$|F|.$ We will use the notational convention that specific settings of variables and parameters are signified by a superscript $^*.$

In this paper we often refer to points (or variables) $\xi_{p,q}$ where $(p,q)$ are points of the integer grid~$\mathbb{Z}^2$. Then, $p$ and $q$ denote the $x$- and $y$-coordinates of $(p,q)$, respectively. In the grid $[m]\times[n],$ the  set $\{i\}\times[n]$ and $[m]\times\{j\}$ is called \emph{column~$i$} and \emph{row~$j,$} respectively.  For $k\in\mathbb{N}$ we refer to  $\{(i-1)k+1,\dots,ik\}\times[n]$ and $[m]\times\{(j-1)k+1,\dots,jk\}$ as \emph{vertical strip~$i$ (of width~$k$)} and \emph{horizontal strip~$j$ (of width~$k$)}, respectively. Of course, the standard matrix notation can be obtained from our notation by the coordinate transformation
\[
\left(\begin{array}{ccc} \xi_{1,n} & \cdots & \xi_{m,n}\\ \vdots & & \vdots\\ \xi_{1,1} & \cdots & \xi_{m,1} \end{array}\right) \mapsto \left(\begin{array}{ccc} \xi_{1,1} & \cdots & \xi_{1,n}\\ \vdots & & \vdots\\ \xi_{m,1} & \cdots & \xi_{m,n} \end{array}\right).
\]

Generally, any subset of the $[m]\times[n]$ grid is called a \emph{window}. Windows of the form $([a,b]\times[c,d])\cap\mathbb{Z}^2,$ with $a,b,c,d\in\mathbb{Z}$ and $a\leq b,$ $c\leq d,$ are called \emph{boxes.} Defining for any $k\in\mathbb{N}$ and $m,n\in k\mathbb{N}$ the set of \emph{(lower-left) corner points} $C(m,n,k):=([m]\times [n])\cap(k\mathbb{N}_0+1)^2,$ we call any box $B_k(i,j):=(i,j)+[k-1]^2$ with $(i,j)\in C(m,n,k)$ a \emph{block}; see Figure~\ref{fig:ill} for an illustration. (Here we depict the structure both as point set and as pixels. In the following we restrict the figures to pixel images, which we find more intuitive.) The blocks form a partition of $[m]\times[n],$ i.e., $\bigcup_{(i,j)\in C(m,n,k)}B_k(i,j)=[m]\times[n].$  In the main part of this paper, we consider such non-overlapping blocks, which play also a role in super-resolution imaging \cite{agsuperresolution}. In Section~\ref{sect:finalremarks} we consider also other windows, which may be positioned at other places than those defined by $C(m,n,k).$

\begin{figure}[htb] 
\centering
\subfigure[]{\includegraphics[width=0.19\textwidth]{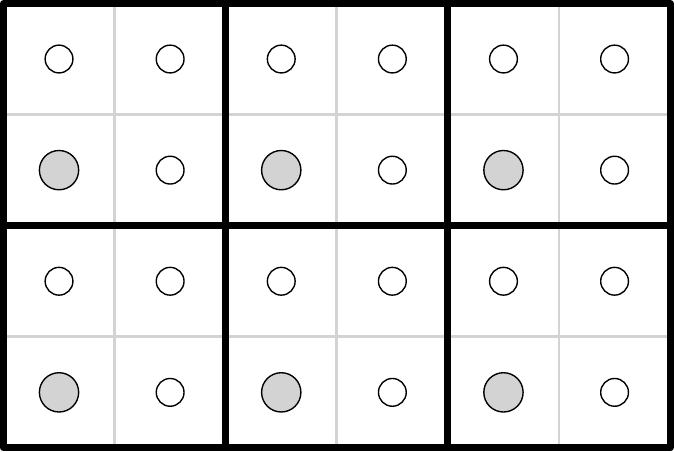}}\hspace*{4ex}
\subfigure[]{\includegraphics[width=0.19\textwidth]{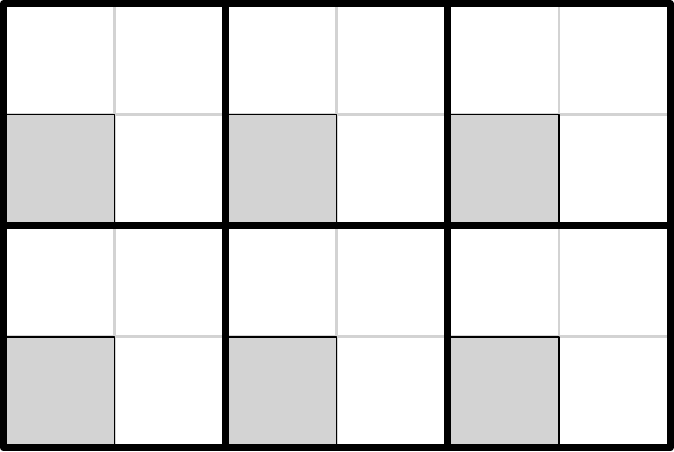}}
\caption{(a) Point set, (b) the same structure depicted as pixels showing the six blocks $B_k(i,j),$ $(i,j)\in C(m,n,k)$ for $m=6,$ $n=4,$ and $k=2.$ The elements of $C(m,n,k)$ are highlighted.}\label{fig:ill}
\end{figure}

Next we introduce three patterns that we will study in detail since they exhibit already the general complexity jump behavior we are particular interested in. 

The first pattern $P(k,0)$ is unconstrained, i.e., does not pose any additional restrictions on the positions of $1$'s. The second pattern $P(k,1)$ forces all elements in the $k\times k$ block to be $0$ except possibly for the two entries in the lower-left and upper-right corner. The third pattern $P(k,2)$ excludes all patterns that admit more than one~$1$ in each row of the $k\times k$ block (see also Figure~\ref{fig:C1}). Here are the formal definitions.

For $k\in \mathbb{N}$ let $2^{[k-1]_0^2}$ denote the power set of $[k-1]_0^2.$ Then we set
\begin{align*}
P(k,0)&:=2^{[k-1]_0^2},\\
P(k,1)&:=\{\{(0,0)\},\{(k-1,k-1)\}\},\\
P(k,2)&:=\{M\in2^{[k-1]_0^2} \::\: |M\cap\left([k-1]_0\times\{j\}\right)|\leq 1 \textnormal{ for all } j\in [k]_0.\}
\end{align*} 

Further, for $(i,j)\in C(m,n,2)$ and $x=(\xi_{i,j})_{i\in[m],j\in[n]}$ we set \[\textnormal{pat}_k(x,i,j):=\{(p,q)\in B_k(i,j)\::\: \xi_{p,q}\neq0\}-(i,j).\]

For $k,\nu\in\mathbb{N}$ and $t\in\{0,1,2\}$ we define now the following problem \textsc{Rec}$(k,\nu,t).$%\vspace*{1ex}

\begin{center}
\begin{minipage}{0.95\textwidth}
\textsc{Rec}$(k,\nu,t)$\\[-3ex]

\hspace*{1ex}\begin{minipage}{0.98\textwidth}
\begin{alignat*}{6}
&\textnormal{Instance:}\quad&& \omit \rlap{$\displaystyle m,n \in k\mathbb{N},$}\\
&&&\omit \rlap{$\displaystyle r_1,\dots,r_n \in \mathbb{N}_0,$}&&&&&& \textnormal{(row sum measurements)} \\
&&&\omit \rlap{$\displaystyle c_1,\dots,c_m \in \mathbb{N}_0,$}&&&&&& \textnormal{(column sum measurements)} \\
&&&\omit \rlap{$v(i,j)\in\{0,\nu\},$}   &&&& (i,j)\in C(m,n,k), &&  \textnormal{(block measurements)}\\[1.2ex]
&\textnormal{Task:}\quad&& \omit \rlap{Find  $\xi_{p,q}\in\{0,1\},$ \:$(p,q)\in[m]\times[n],$ with }\\
\displaystyle &&&\sum_{\mathclap{p\in[m]}} \xi_{p,q}&&=r_q,&&   q\in[n], &&\textnormal{(row sums)} \\ 
\displaystyle &&&\sum_{q\in[n]} \xi_{p,q}&&=c_p,&&  p \in [m], &&\textnormal{(column sums)}\\  
\displaystyle &&&\sum_{\mathclap{(p,q)\in B_k(i,j)}} \xi_{p,q}&& \leq v(i,j),&&  (i,j)\in C(m,n,k), \quad &&\textnormal{(block constraints)}\\
\displaystyle &&&\textnormal{pat}_k(x,i,j)&&\in P(k,t),\qquad&&  (i,j)\in C(m,n,k),\qquad &&\textnormal{(pattern constraints),}\\[1.2ex]
&&& \omit \rlap{or decide that no such solution exists.}\\
\end{alignat*}
\end{minipage}
\end{minipage}
\end{center}

In other words, we ask for $0/1$-solutions that satisfy given row and column sums, \emph{block constraints} of the form $\sum_{(p,q)\in B_k(i,j)}\xi_{i,j}\leq v(i,j)$ with given $v(i,j)\in\{0,\nu\},$ and \emph{pattern constraints} that restrict the potential locations of the 1's in each block. We remark that the $v(i,j),$ $(i,j)\in C(m,n,k),$ can be viewed as some part of prior knowledge. In particle tracking \cite{glidingarc-15, agms-15}, for instance, the $v(i,j)$ may reflect prior knowledge about physically meaningful particle trajectories; see also \cite{agdynamic}.

Our main results show that the computational complexity of $\textsc{Rec}(k,\nu,t)$ may change drastically when~$k,$~$\nu,$ or~$t$ is varied. 

\begin{thm} \label{thm:tract} \hfill
\vspace*{-1ex}\begin{enumerate}[(i)]
\itemsep1ex 
\item \textsc{Rec}$(1,\nu,t)\in\mathbb{P}$ for any $\nu\in\mathbb{N}$ and $t\in\{0,1,2\}.$ \label{thm:tract:i}
\item \textsc{Rec}$(k,1,0)\in\mathbb{P}$ for any $k\geq2.$ \label{thm:tract:ii}
\item \textsc{Rec}$(k,\nu,2)\in\mathbb{P}$ for any $k\geq2$ and $\nu\geq k.$ \label{thm:tract:iii}
\end{enumerate}
\end{thm}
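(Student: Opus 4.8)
The plan is to place all three statements in $\P$ by reducing each to a polynomial-time feasibility problem for an integral flow in a suitably capacitated network. In every case the outer layers of the network will carry the exact row and column sums (modelled as lower bounds equal to upper bounds, which the standard lower-bound reduction turns into an ordinary max-flow instance), and the window constraints will be encoded by an intermediate layer of nodes or edges. Since all capacities are integral and the flow constraint matrices are totally unimodular, a feasible fractional flow can always be taken integral, so a single max-flow computation both decides feasibility and, if feasible, yields a reconstruction; the network has $O(mn)$ nodes and edges, which is polynomial. Throughout, I identify column $p$ with the vertical strip that contains it and row $q$ with the horizontal strip that contains it, so that a block $B_k(i,j)$ is exactly the intersection of one vertical and one horizontal strip.

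For (i) I would first use that for $k=1$ each block $B_1(i,j)$ is the single cell $(i,j)$. Hence the block constraint only forces $\xi_{i,j}=0$ when $v(i,j)=0$ and is vacuous when $v(i,j)=\nu\geq 1$. The pattern plays no role for $t\in\{0,2\}$, since $P(1,0)$ and $P(1,2)$ both exhaust all one-cell configurations, whereas for $t=1$ it forces $\xi_{i,j}=1$ everywhere, a case decided by the direct check $r_q=m$ for all $q$, $c_p=n$ for all $p$, and $v(i,j)=\nu$ for all $(i,j)$. In the remaining cases the task is precisely to reconstruct a $0/1$ matrix with prescribed row and column sums and a prescribed set of forbidden positions $\{(p,q):v(p,q)=0\}$, the classical bipartite transportation problem, which lies in $\P$.

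For (ii), where $\nu=1$ and $t=0$, I would use a four-layer network $s\to(\textnormal{columns})\to(\textnormal{blocks})\to(\textnormal{rows})\to t$: an edge $s\to u_p$ of capacity $c_p$ for each column $p$, an edge $z_q\to t$ of capacity $r_q$ for each row $q$, a node $w_{I,J}$ of capacity $v(i,j)\in\{0,1\}$ for each block, and unit-capacity edges $u_p\to w_{I,J}$ for $p$ in vertical strip $I$ and $w_{I,J}\to z_q$ for $q$ in horizontal strip $J$. The decisive point is that, by flow conservation, the at most one unit passing through $w_{I,J}$ enters from a single column $p$ of strip $I$ and leaves to a single row $q$ of strip $J$, hence encodes one cell $(p,q)\in B_k(i,j)$; this is exactly the block constraint $\sum_{B_k(i,j)}\xi\leq 1$ under an unrestricted pattern. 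Thus integral feasible flows correspond to admissible matrices.

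For (iii), where $\nu\geq k$ and $t=2$, I would first observe that the block-sum constraint is redundant: the pattern $P(k,2)$ already permits at most one $1$ in each of the $k$ rows of a block, hence at most $k\leq\nu$ entries per block. What remains is the pattern condition ``at most one $1$ per row of each block,'' i.e.\ at most one $1$ in every (row, vertical strip) pair, together with forbidding all cells of blocks with $v=0$. I would encode this by the network $s\to(\textnormal{rows})\to(\textnormal{row},\textnormal{strip})\to(\textnormal{columns})\to t$ with a unit-capacity node $y_{q,I}$ for each row $q$ and vertical strip $I$, the exact demands $r_q,c_p$ on the outer edges, and the $v=0$ blocks realized by deleting the edges incident to $y_{q,I}$ for $q$ in the corresponding horizontal strip; once more a one-unit passage through $y_{q,I}$ selects a single cell. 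The main obstacle I anticipate is not any single construction but recognizing that such a flow encoding exists at all: the partitions of the cells into columns, rows, and blocks overlap, so a naive model is a three-dimensional transportation instance, $\NP$-hard in general. The flow-conservation trick of (ii), which binds the column- and row-contribution of each block to the same cell, and the redundancy of the cap in (iii) are what keep the problem in $\P$; checking that these encodings admit no spurious split solutions, where integrality of network flows is invoked, is the step that requires the most care.
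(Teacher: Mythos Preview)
Your argument is correct, and in fact more uniform than the paper's. For~(i) both you and the authors reduce to the classical row/column reconstruction with forbidden cells; you are actually more careful about the degenerate case $t=1$, $k=1$, where $P(1,1)=\{\{(0,0)\}\}$ forces every entry to~$1$. For~(ii) the paper does \emph{not} use a single flow: it first solves a coarse $0/1$ reconstruction on the $(m/k)\times(n/k)$ grid of blocks (deciding which blocks contain a~$1$), and then invokes a separate result (their \textsc{DR}$(1)$, Proposition~\ref{prop:mono}) to place the single~$1$ inside each selected block consistently with the fine row and column sums. Your four-layer network collapses these two steps into one; the node-capacity trick at $w_{I,J}$ that couples the incoming column and outgoing row into a single cell is precisely what lets you avoid the two-phase argument. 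For~(iii) the paper again avoids an explicit flow: after the same observation that $\nu\ge k$ makes the block cap redundant, it replaces block and pattern constraints by the per-row-of-block inequalities $\sum_{p}\xi_{p,q}\le\min\{1,v(i,j)\}$ and proves a tailored total-unimodularity lemma (Lemma~\ref{prop:totallyunimodular}) showing that appending these rows to the node--edge incidence matrix of the bipartite row/column graph keeps the system TU. Your network with the $(q,I)$ nodes is an equivalent certificate of TU, but packaged as a flow. What your approach buys is uniformity and directness (one construction schema for all three parts, and an algorithm that falls out of max-flow); what the paper's route buys in~(iii) is a reusable structural lemma about appending disjoint sub-row constraints to a bipartite incidence matrix, which may apply to other pattern families beyond $P(k,2)$.
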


%\newpage
\begin{thm} \label{thm:NP} \hfill
\vspace*{-1ex}\begin{enumerate}[(i)]
\itemsep1ex 
\item \textsc{Rec}$(k,1,1)\in\mathbb{N}\mathbb{P}$-hard for any $k\geq2.$ \label{thm:NP:i}
\item \textsc{Rec}$(k,2,0)\in\mathbb{N}\mathbb{P}$-hard for any $k\geq2.$ \label{thm:NP:ii}
\end{enumerate}
\end{thm}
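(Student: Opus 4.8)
My plan is to prove each part by reduction from a known \NP-complete problem, and to pass from $k=2$ to arbitrary $k\ge 2$ by the routine padding that prescribes the sum $0$ on all interior rows and columns of every strip (forcing the corresponding entries to vanish) and assigns $v(i,j)=0$ to the unused blocks; in both parts this reduces the situation inside each relevant block to the $2\times 2$ case. I therefore describe only $k=2$.

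For (i) I would first record the structural consequence of $t=1$ and $\nu=1$: every block $B_2(i,j)$ carries at most one $1$, and if present it sits either in the lower-left corner $(i,j)$ or in the upper-right corner $(i+1,j+1)$. Introducing indicators $L_{i,j},U_{i,j}\in\{0,1\}$ for these two events, the block constraint with $v=1$ becomes exactly $L_{i,j}+U_{i,j}\le 1$, i.e.\ the matrices $L$ and $U$ have disjoint supports. Since lower-left corners occupy precisely the odd rows and odd columns and upper-right corners precisely the even rows and columns (all remaining entries being forced to $0$ by $P(2,1)$), the prescribed sums decouple: the values on the odd rows and columns are the row and column sums of $L$, and those on the even rows and columns are the row and column sums of $U$. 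Hence, taking all $v(i,j)=1$, feasible \textsc{Rec}$(2,1,1)$-instances correspond bijectively to pairs of $0/1$-matrices with disjoint supports and prescribed row and column sums --- equivalently, to three-colourings of a grid (background, $L$, $U$) with prescribed horizontal and vertical colour counts. As this reconstruction problem is known to be \NP-hard, so is \textsc{Rec}$(2,1,1)$; should a self-contained argument be preferred, the same $L/U$ encoding supports a direct reduction from \textsc{3-Sat}.

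For (ii) the pattern is unconstrained, and I would reduce from \textsc{3-Sat}, conveniently from \textsc{Planar 3-Sat} to keep the two-dimensional wiring crossing-free. The clause gadget uses the inequality directly: if the entry $1$ encodes that a literal is \emph{false}, then gathering the three literal-cells of a clause into one block with $v=2$ (the remaining cell of the block being forced to $0$) turns the constraint $\sum\le 2$ into ``at most two of the three literals are false'', i.e.\ into satisfaction of the clause. A variable gadget must emit a globally consistent truth value and deliver it, together with its negation, to every clause in which the variable occurs; I would realise this through dedicated ``wire'' rows and columns whose prescribed sums force an alternating occupancy pattern, separating distinct wires by $v=0$ blocks and using the $\le 2$ constraint of the traversed blocks to bound how much signal passes through each. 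Choosing the row and column sums so that a filling is feasible exactly when every variable is assigned consistently and every clause block is satisfiable then completes the reduction.

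The main obstacle is the gadget engineering in (ii): because the blocks tile the grid, \emph{every} block --- not only those inside a gadget --- is subject to its $\le 2$ (or $=0$) constraint, so the real difficulty is to lay out the variable, wire and clause gadgets, insulated by $v=0$ blocks, so that their only feasible fillings encode consistent assignments and satisfied clauses and so that no two gadgets interact through a shared row, column or block. Once a correct planar layout is fixed, verifying polynomiality and the equivalence ``feasible $\iff$ satisfiable'' is routine. In (i), by contrast, the sole nontrivial ingredient is the cited \NP-hardness of the disjoint-support (three-colour) reconstruction problem; the $L/U$ bookkeeping and the $k\ge 2$ padding are straightforward.
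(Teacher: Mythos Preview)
Your treatment of~(i) is essentially the paper's proof: the paper too observes that for $k=2$ the pattern $P(2,1)$ leaves only the three block types (lower-left, upper-right, empty), that odd- and even-indexed row/column sums count them separately, and then reduces from \textsc{3-color tomography} (exactly your ``two $0/1$-matrices with disjoint supports and prescribed marginals''). The padding to $k>2$ is the same.

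For~(ii), however, you have submitted a plan rather than a proof, and it diverges from the paper's argument. The paper reduces from \textsc{1-in-3-SAT} and reuses the explicit ``circuit board'' construction of~\cite{agsuperresolution}: truth values are encoded not as single cells but as specific two-entry patterns in a $2\times 2$ block; row and column sums alternating between $1$ and $3$, together with the $\le 2$ block constraint, force exactly two ones per active block and propagate the values through initializers, connectors, and clause chips (each made of collectors, verifiers and a transmitter). The only new work is to replace the few $(=,2)$-constraints of~\cite{agsuperresolution} by $(\le,2)$-constraints and to insert extra initializers/connectors so that the sums still force two ones in the $\lnot\tau_t$-chips.

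Your alternative clause gadget---three literal cells placed in a single $2\times 2$ block with $v=2$ and the fourth cell ``forced to $0$''---is not shown to be realisable. Three independent variable wires would have to terminate in a region spanning at most two rows and two columns; you do not say how they get there, how the fourth cell is pinned to $0$ without zeroing an entire row or column, or how wires actually carry a Boolean value under the global $\{0,2\}$-block tiling. You yourself flag this gadget engineering as ``the main obstacle'' and leave it unresolved; that is precisely the content the paper supplies (via a different encoding and a different source problem). As it stands, part~(ii) of your proposal is an outline with the decisive construction missing.
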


The most notable changes are summarized in Table~\ref{table:sumtable1}.  Some of these changes may at first glance seem somewhat counterintuitive. For instance, restricting the solution space via pattern constraints turns the $\mathbb{N}\mathbb{P}$-hard problem \textsc{Rec}$(k,2,0)$ $(k\geq 2),$ into the polynomial time solvable problem \textsc{Rec}$(k,2,2);$ Figure~\ref{fig:C1} depicts the possible types of blocks in \textsc{Rec}$(k,2,2).$  Conversely, additional pattern constraints convert the tractable problem \textsc{Rec}$(k,1,0)$ into the $\mathbb{N}\mathbb{P}$-hard problem  \textsc{Rec}$(k,1,1)$ $(k\geq 2).$

\begin{table}[htb]
\begin{tabular}{ccc}\toprule
& $\mathbb{P}$ & $\mathbb{N}\mathbb{P}$-hard\\\midrule
varying $k$ & \textsc{Rec}$(1,2,0)$ & \textsc{Rec}$(k,2,0)$ \\
& \textsc{Rec}$(1,1,1)$ & \textsc{Rec}$(k,1,1)$\\[2ex]
varying $\nu$ & \textsc{Rec}$(k,1,0),$ & \textsc{Rec}$(k,2,0)$\\[2ex]
varying $t$ & \textsc{Rec}$(k,2,2)$ & \textsc{Rec}$(k,2,0)$ \\
&  \textsc{Rec}$(k,1,0)$ &\textsc{Rec}$(k,1,1)$\\\bottomrule\\[-.1ex]
\end{tabular}
\caption{Computational complexity of $\textsc{Rec}(1,\nu,t)$ and $\textsc{Rec}(k,\nu,t)$ for $k\geq 2$ under change of a single parameter.}\label{table:sumtable1}
\end{table}

\begin{figure}[htb] 
\centering
\includegraphics[width=0.5\textwidth]{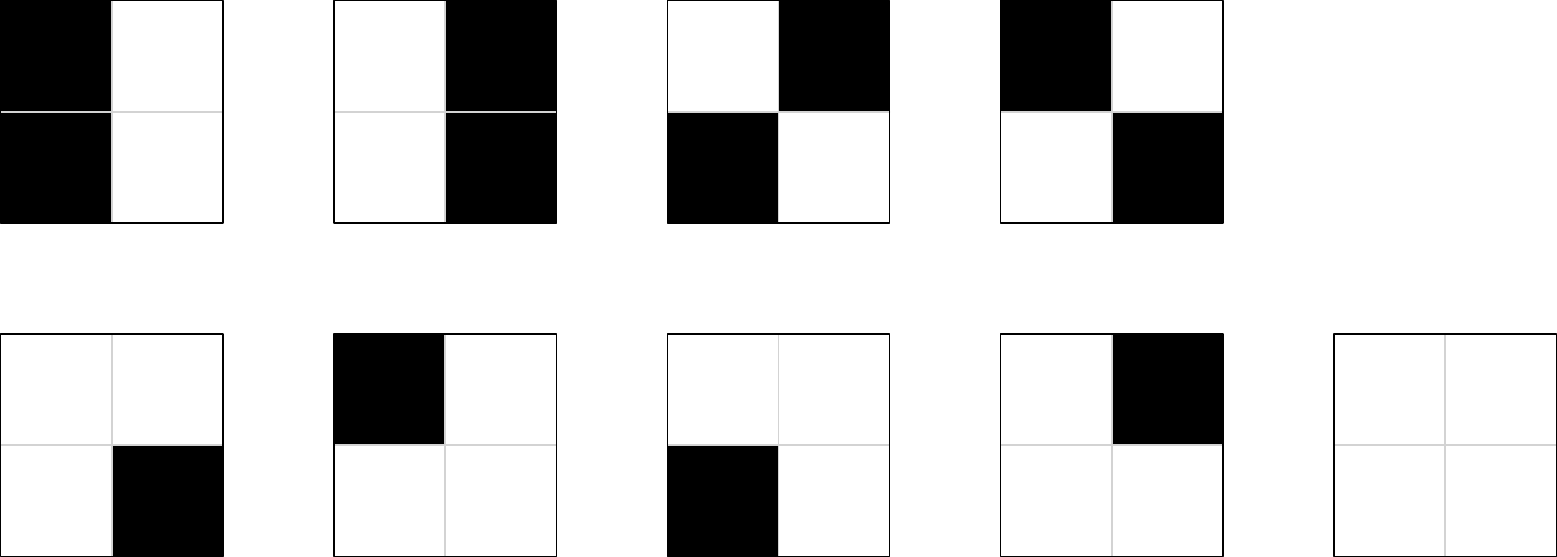}
\caption{The 9 possible types of $1$-entries (black pixels) in each block of any solution of a $\textsc{Rec}(2,2,2)$ instance.}\label{fig:C1}
\end{figure}

\section{Tractability results: Polynomial-time solvability}\label{sec-tractability}

This section contains the proofs of our tractability results stated in Theorem~\ref{thm:tract}.

Clearly, a polynomial-time algorithm for \textsc{Rec}$(1,\nu,t)$ can be given along the classical lines for deciding whether given row and column sums of a binary matrix are consistent. In the following we give one of the various arguments.

\begin{proof}[Proof of Theorem~\ref{thm:tract}(\ref{thm:tract:i})]
The problem obviously reduces to that of reconstructing a $0/1$-matrix from given row and column sums with some of the matrix entries fixed to $0.$ 
It is well known that the coefficient matrix is totally unimodular, see, e.g.,  \cite{gritzmann97, aharoni97}, \cite[Sect.~19.3]{schrijver}. The problem can thus be solved efficiently, e.g., by linear programming. 
\end{proof}

Next we turn to the proof of Theorem~\ref{thm:tract}(\ref{thm:tract:ii}). For $(i,j)\in I\subseteq C(m,n,k),$ let 
\[
\sigma_i(j):=|(\{i\}\times[j])\cap I|\qquad \textnormal{and}\qquad \rho_j(i):=|([i]\times\{j\})\cap I|
\] 
denote the number of blocks~$B_k(a,b)$ with $(a,b)\in I,$ $b\leq j$ and  $a\leq i,$  that lie in the same vertical and horizontal strip, respectively, as the block $B_k(i,j).$
Moreover, for $I\subseteq C(m,n,k)$ we set 
\begin{align*}
G(I)&:=\bigcup_{(i,j)\in I}B_k(i,j)\subseteq [m]\times[n],\\
\Pi_x(I)&:=\{i\in[m]:\exists j\in[n]:(i,j)\in I\},\\
\Pi_y(I)&:=\{j \in [n]:\exists i \in [n]:(i,j)\in I\}.
\end{align*}
Note that $\Pi_x(I)$ and $\Pi_y(I)$ denotes the projection of~$I$ onto the first and second coordinate, respectively.

We use a (slightly generalized version of a) result of \cite{agsuperresolution} on the following problem.

\begin{center}
\begin{minipage}{0.95\textwidth}
\textsc{DR}$(1)$\\[-3ex]

\hspace*{2ex}\begin{minipage}{0.5\textwidth}
\begin{alignat*}{6}
&\textnormal{Instance:}\quad&& \omit \rlap{$\displaystyle m,n \in k\mathbb{N},$}\\
&&&\omit \rlap{$\displaystyle I\subseteq C(m,n,k),$}&&\hspace*{11ex}&&&\hspace*{8ex}& \textnormal{(a set of corner points)}\\
&&&\omit \rlap{$\displaystyle r_{j+l}\in \mathbb{N}_0,$}&&&& j\in\Pi_y(I), \:\: l\in [k-1]_0,&& \textnormal{(row sum measurem.)}  \\
&&&\omit \rlap{$\displaystyle c_{i+l}\in \mathbb{N}_0,$}&&&& i\in\Pi_x(I), \:\: l\in [k-1]_0,&& \textnormal{(column sum measurem.)}\\[1.2ex]
&\textnormal{Task:}\quad&& \omit \rlap{Find  $\xi_{p,q}\in\{0,1\}, \:(p,q)\in G(I)$ with }\\
\displaystyle &&&\sum_{\mathclap{p:(p,j)\in G(I)}}\xi_{p,j+l}&&=r_{j+l},&&  j\in\Pi_y(I), \:\: l\in [k-1]_0,&&\textnormal{(row sums)} \\ 
\displaystyle &&&\sum_{\mathclap{q:(i,q)\in G(I)}}\xi_{i+l,q}&&=c_{i+l},&&  i\in\Pi_x(I), \:\: l\in [k-1]_0,&&\textnormal{(column sums)}\\  \displaystyle &&&\sum_{\mathclap{(p,q)\in B_k(i,j)}}\xi_{p,q}&&=1,&\:\:& (i,j)\in I, &&\textnormal{(block constraints),}\\
&&& \omit \rlap{or decide that no such solution exists.}
\end{alignat*}
\end{minipage}
\end{minipage}
\end{center}

\begin{prop}[\cite{agsuperresolution}] \label{prop:mono}\hfill
\begin{enumerate}[(i)]
\item \label{mono2} An instance $\mathcal{I}$ of \textsc{DR}$(1)$ is feasible if, and only if, for every $(i,j) \in I$ we have
\[\sum_{l=0}^{k-1}r_{j+l}=\rho_j(m) \quad \textnormal{ and } \quad \sum_{l=0}^{k-1}c_{i+l}=\sigma_i(n).\]
\item \label{mono1} \textsc{DR}$(1)\in\mathbb{P}.$ 
\end{enumerate}
\end{prop}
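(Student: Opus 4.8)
The plan is to reduce $\textsc{DR}(1)$ to two completely independent distribution problems, one for the rows and one for the columns, by exploiting that within each block exactly one $1$ is placed and that the horizontal and vertical offsets of that $1$ can be chosen separately.

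First I would record the necessity of the stated conditions, which is a direct counting argument. Fix $(i,j)\in I.$ Every $1$ lying in the rows $j,j+1,\dots,j+k-1$ and belonging to $G(I)$ sits in some block $B_k(i',j)$ of horizontal strip $j,$ and by the block constraints each such block contributes exactly one $1;$ hence $\sum_{l=0}^{k-1}r_{j+l}$ equals the number $\rho_j(m)$ of blocks of $I$ in that strip. The identity $\sum_{l=0}^{k-1}c_{i+l}=\sigma_i(n)$ follows symmetrically by counting $1$'s in the columns $i,\dots,i+k-1.$ Since the first identity depends only on $j$ and the second only on $i,$ the conditions amount to one equation per occupied horizontal strip and one per occupied vertical strip.

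For sufficiency and the algorithm I would describe a placement that assigns to each block $(i,j)\in I$ a pair of offsets $(a_{ij},b_{ij})\in[k-1]_0^2$ and puts its single $1$ at $(i+a_{ij},j+b_{ij}).$ The key observation is that the column sums depend only on the horizontal offsets within each vertical strip, the row sums only on the vertical offsets within each horizontal strip, and these two families of constraints share no variables; so the $a$'s and $b$'s may be chosen independently. Concretely, for each $i\in\Pi_x(I)$ I would distribute its $\sigma_i(n)$ blocks among the offsets $a\in[k-1]_0$ so that offset $a$ is used exactly $c_{i+a}$ times, which is possible precisely because $\sum_l c_{i+l}=\sigma_i(n)$ and all $c_{i+l}\ge 0.$ Symmetrically, within each $j\in\Pi_y(I)$ I would distribute the $\rho_j(m)$ blocks so that vertical offset $b$ is used $r_{j+b}$ times. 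Merging the two assignments yields a $0/1$-filling of $G(I)$ in which every block carries one $1,$ every column $i+l$ receives $c_{i+l}$ ones and every row $j+l$ receives $r_{j+l}$ ones, settling part (\ref{mono2}). As each partition is computed by a trivial greedy assignment over $O(mn)$ blocks, the whole construction runs in polynomial time, giving part (\ref{mono1}).

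I do not expect a genuine obstacle here; the crux is merely to notice the decoupling of horizontal and vertical offsets, after which both the feasibility characterization and the complexity bound are immediate. The only point requiring a little care is checking that the independently chosen offsets never conflict, but this holds automatically: for any $(a,b)\in[k-1]_0^2$ the entry $(i+a,j+b)$ lies inside $B_k(i,j),$ so the per-strip assignments can be combined without interaction. A more mechanical alternative would be to phrase $\textsc{DR}(1)$ as an integral transportation problem and invoke total unimodularity, but the direct decoupling is both shorter and constructive.
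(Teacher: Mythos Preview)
Your proposal is correct and matches the paper's treatment. The paper does not prove the proposition in full but cites it from \cite{agsuperresolution} and, as a service to the reader, records the explicit solution \eqref{eq:sol1}: for each block $B_k(i,j)$ the horizontal offset is $\min\{l:\sigma_i(j)\le\sum_{h\le l}c_{i+h}\}$ and the vertical offset is $\min\{l:\rho_j(i)\le\sum_{h\le l}r_{j+h}\}$. This is precisely one concrete instance of the greedy per-strip distribution you describe, and it relies on exactly the decoupling of horizontal and vertical offsets that you identify as the crux.
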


As a service to the reader we remark that a solution $\xi^*_{p,q},$ $(p,q)\in G(I),$ for a given instance of \textsc{DR}$(1)$ is obtained by setting for every $(i,j) \in I$ and $(p,q)\in B_k(i,j):$ 
\begin{equation} \label{eq:sol1}
\begin{aligned}
a_{i,j}&:=i+\min\{l\in\{0,1\}:\sigma_{i}(j)\leq \sum_{h=0}^{l}c_{i+h}\},\\
b_{i,j}&:=j+\min\{l\in\{0,1\}:\rho_{j}(i)\leq \sum_{h=0}^{l}r_{j+h}\},\\
\xi^*_{p,q}&:=\left\{\begin{array}{lll}1&:& (p,q)=(a_{i,j},b_{i,j}),\\ 0&:&\textnormal{otherwise.} \end{array}\right.
\end{aligned} 
\end{equation}
An illustration is given in Figure~\ref{fig:Example1} (taken from \cite{agsuperresolution}).

\begin{figure}[htb]
\centering
\includegraphics[width=0.45\textwidth]{}
\caption{(\cite{agsuperresolution}) Illustration of \textsc{DR}(1). (Left) Row and column sums and blocks $B_k(i,j)$ with $(i,j)\in I$ in gray color. (Right) Solution defined by~\eqref{eq:sol1}.}\label{fig:Example1}
\end{figure}

Now we provide a polynomial-time algorithm for \textsc{Rec}$(k,1,0).$

\begin{proof}[Proof of Theorem~\ref{thm:tract}(\ref{thm:tract:ii})]
Let $\mathcal{I}=(m,n,r_1,\dots,r_n,c_1,\dots,c_m,v(1,1),\dots,v(m-k+1,n-k+1))$ denote an instance of \textsc{Rec}$(k,1,0).$ We proceed in two steps.

First, we solve the following  reconstruction-from-row-and-column-sums  instance of finding $\eta_{i,j}\in\{0,1\},$ $(i,j)\in C(m,n,k),$ satisfying the constraints
\begin{equation}\label{eq:rowcolsums}
\begin{aligned}
 \sum_{i:(i,j)\in I} \eta_{i,j}&=\sum_{l=0}^{k-1}r_{j+l},  \qquad &&j\in[n]\cap(k\mathbb{N}_0+1),\\
 \sum_{j:(i,j)\in I} \eta_{i,j}&=\sum_{l=0}^{k-1}c_{i+l},  &&i\in [m]\cap(k\mathbb{N}_0+1),\\
 \eta_{i,j}&=0, &&(i,j)\in \{(a,b)\in C(m,n,k):v(a,b)=0\}.
\end{aligned}
\end{equation} If no solution exists, we report infeasibility of $\mathcal{I}.$ The idea behind solving~\eqref{eq:rowcolsums} first is that in this way we locate boxes that we want to contain a~1.

In a second step, if \eqref{eq:rowcolsums} is feasible, we invoke \textsc{DR}$(1)$ from Proposition~\ref{prop:mono} for the instance $\mathcal{I}'$ where \[\mathcal{I'}:=(m,n, I,r_1,\dots,r_{n},c_1,\dots,c_{m})\] and
\begin{equation} \label{eq:Idef}
 \qquad  I:=\{(i,j)\in C(m,n,k):\eta^*_{i,j}=1\}.\end{equation} If there is no solution we report infeasibility of $\mathcal{I};$ otherwise, we return the solution of \textsc{DR}$(1).$ It remains to be shown that this solves~$\mathcal{I}$ correctly. 

To this end, suppose that $\mathcal{I}$ has a solution $\xi^*_{i,j},$ $(i,j)\in [m]\times[n].$ Then,~\eqref{eq:rowcolsums} is also feasible (a solution is given by $\eta^*_{i,j}:=\sum_{(p,q)\in B_k(i,j)}\xi^*_{p,q},$ $(i,j)\in C(m,n,k)$) and $\xi_{i,j}^*$ satisfies the corresponding instance of \textsc{DR}$(1).$  Conversely, let $\eta_{i,j}^*,$ $(i,j)\in C(m,n,k),$ denote a solution to~\eqref{eq:rowcolsums} and let $I$ be defined as in~\eqref{eq:Idef}. For every $(i,j) \in I$ we have clearly
\[\sum_{l=0}^{k-1}r_{j+l}=\rho_j(m) \quad \textnormal{ and } \quad \sum_{l=0}^{k-1}c_{i+l}=\sigma_i(n).\]  By Proposition~\ref{prop:mono} there is thus a solution to \textsc{DR}$(1).$ This solution satisfies all row and column sum constraints of~$\mathcal{I}$ by definition. Further since $\{(i,j)\in C(m,n,k): v(i,j)=0\}\cap I=\emptyset$ by definition of~\eqref{eq:rowcolsums}, there are 1's in the solution only in those blocks $B_k(i,j),$ $(i,j)\in C(m,n,k),$ for which $v(i,j)=1$ holds.  The block constraints are thus also satisfied, which proves the claim.

The integer linear program~\eqref{eq:rowcolsums} can be solved in polynomial time as the coefficient matrix is well known to be totally unimodular; see, e.g., \cite{gritzmann97, aharoni97}, \cite[Sect.~19.3]{schrijver}. The set $I$ in~\eqref{eq:Idef} is determined in polynomial time, and \textsc{DR}$(1)\in\mathbb{P}$ by Proposition~\ref{prop:mono}(\ref{mono1}). Hence, \textsc{Rec}$(k,1,0)\in \mathbb{P}.$ 
\end{proof}

Next we turn to the proof of Theorem~\ref{thm:tract}(\ref{thm:tract:iii}).
It relies on the following result. 

\begin{lem} \label{prop:totallyunimodular}
Let $A=(a_1,\dots,a_{m_1})^T\in  \{0,1\}^{m_1\times n_1}$ be the node-edge incidence matrix of a (simple) bipartite graph, and let $H=(h_1,\dots,h_{m_2})^T$ denote a binary $m_2\times n_1$ matrix with the following properties.
\begin{enumerate}[(i)]
\item For every $l \in [m_2]$ there exists an index $i\in[m_1]$ with $\text{supp}(h_l)\subseteq \text{supp}(a_i)$; \label{lem1:prop2}
\item The supports of any two distinct vectors $h_i$ and $h_l$ of $H$ with $\text{supp}(h_i)\cup\text{supp}(h_l)\subseteq \text{supp}(a_k)$ for some $k\in[m_1]$ do not intersect. \label{lem1:prop3}
\end{enumerate}
Then, $\left(\begin{array}{l}A\\H\end{array}\right)$ is totally unimodular.
\end{lem}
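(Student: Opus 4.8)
The plan is to realise $\left(\begin{smallmatrix}A\\H\end{smallmatrix}\right)$ \emph{exactly} as a network matrix: since network matrices are totally unimodular (see, e.g., \cite[Sect.~19.3]{schrijver}), it suffices to exhibit a directed tree $T$ together with a set of arcs whose network matrix coincides with the given $0/1$ matrix. I write $A$ as the incidence matrix of the bipartite graph $G=(X\sqcup Y,E)$, so that the rows $a_v$ correspond to vertices $v\in X\cup Y$, the columns (common to $A$ and $H$) correspond to edges $e\in E$, and each edge $e$ joins $x(e)\in X$ to $y(e)\in Y$. For each row $h_l$ of $H$ I fix, using property~(\ref{lem1:prop2}), an \emph{anchor} vertex $n(l)$ with $\textnormal{supp}(h_l)\subseteq\textnormal{supp}(a_{n(l)})$.

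First I would build $T$: take a root $r$; for every vertex $v$ introduce a node $P_v$ joined to $r$ by a tree edge identified with the row $a_v$; and for every row $h_l$ introduce a node $Q_l$ joined to $P_{n(l)}$ by a tree edge identified with the row $h_l$. This is a tree with $m_1+m_2$ edges, matching the rows of $\left(\begin{smallmatrix}A\\H\end{smallmatrix}\right)$. Next I would place each column $e$ as an arc whose two endpoints sit on the $x(e)$-side and the $y(e)$-side of $T$: at the $v$-end (for $v\in\{x(e),y(e)\}$) the arc attaches at $Q_l$ if $e\in\textnormal{supp}(h_l)$ for the $H$-row $h_l$ anchored at $v$ that contains $e$, and at $P_v$ otherwise.

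The key verification is that the unique tree path of the arc $e$ meets exactly the rows that contain $e$. Property~(\ref{lem1:prop2}) forces every $H$-row containing $e$ to be anchored at $x(e)$ or $y(e)$, since its support lies in a star whose centre is incident to $e$; hence only the tree edges $a_{x(e)},a_{y(e)}$ and the pendant edges at the two ends can lie on the path, and these are precisely the rows forming column $e$. Property~(\ref{lem1:prop3}) is exactly what makes the $v$-end well defined: the $H$-rows anchored at $v$ have pairwise disjoint supports, so $e$ lies in at most one of them, giving a unique attachment node $Q_l$ (or none).

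Finally I would orient everything so that all nonzero entries equal $+1$, which is where bipartiteness enters and which I expect to be the main obstacle. I orient $a_v$ from $P_v$ to $r$ for $v\in X$ and from $r$ to $P_v$ for $v\in Y$; each pendant edge $h_l$ toward $P_{n(l)}$ if $n(l)\in X$ and away from it if $n(l)\in Y$; and every arc $e$ from its $x(e)$-side to its $y(e)$-side. Because each edge of $G$ runs from $X$ to $Y$, inspecting the (at most four) possible path shapes shows each tree edge is traversed in its own direction, so the signed network matrix is literally the $0/1$ matrix $\left(\begin{smallmatrix}A\\H\end{smallmatrix}\right)$, which is therefore totally unimodular. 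The delicate points are precisely the well-definedness of the arc endpoints (using~(\ref{lem1:prop3})) and the sign bookkeeping (using bipartiteness together with~(\ref{lem1:prop2})); without bipartiteness the uniform $+1$ signing would fail, consistent with incidence matrices of odd cycles not being totally unimodular.
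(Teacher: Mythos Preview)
Your argument is correct, but it is genuinely different from the paper's proof. The paper invokes the Ghouila--Houri characterisation (\cite[Thm.~19.3]{schrijver}): it shows that any collection of rows of $\bigl(\begin{smallmatrix}A\\H\end{smallmatrix}\bigr)$ admits an equitable $\pm$-bicolouring. After disposing of singleton $H$-rows, the bipartition $(R_1,R_2)$ of the vertex set colours the $A$-rows, and each selected $H$-row is assigned the colour \emph{opposite} to that of its (now unique) anchor; properties~(\ref{lem1:prop2}) and~(\ref{lem1:prop3}) then force each partial signed sum to stay in $\{0,1\}^{n_1}$. Your route instead exhibits $\bigl(\begin{smallmatrix}A\\H\end{smallmatrix}\bigr)$ explicitly as a network matrix, which is a structurally stronger statement and arguably more illuminating: the tree you build (a star on the $A$-rows with $H$-rows as pendants at their anchors) makes the role of~(\ref{lem1:prop2}) and~(\ref{lem1:prop3}) visible as ``at most one pendant at each end of each arc,'' and bipartiteness is precisely what lets all signs be $+1$. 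The paper's proof is shorter and entirely self-contained once Ghouila--Houri is granted; yours gives an actual tree representation, which could be useful if one later wanted to exploit network-flow algorithms directly rather than general LP.

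One small point worth stating explicitly in your write-up: when $|\textnormal{supp}(h_l)|=1$, the anchor is not unique, and you should note that your arbitrary choice is harmless because property~(\ref{lem1:prop3}) applies to \emph{any} row $a_k$ whose support contains both $H$-rows, not just to the chosen anchor; this is what guarantees that an $H$-row containing $e$ and anchored at $x(e)$ cannot coexist with a second $H$-row containing $e$ and anchored at $y(e)$ unless their supports are disjoint away from $e$ (which they are, since both contain $e$ forces disjointness via~(\ref{lem1:prop3}) applied at either endpoint when the first has singleton support). You handle this correctly, but the paper sidesteps the issue by peeling off singleton rows at the outset.
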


\begin{proof} It suffices to prove the assertion under the additional assumption that 
\begin{equation} \label{eq:h}
|\textnormal{supp}(h_l)|\geq2, \qquad \textnormal{for all }l\in[m_2].
\end{equation}
All vectors $h_l$ whose support is a singleton can be added later since appending any subset of rows of the $m_1\times n_1$-identity matrix to a totally unimodular matrix yields again a totally unimodular matrix.

As the matrix $A$ is the node-edge incidence matrix of a bipartite graph, let $(R_1,R_2)$ denote a corresponding partition of (the indices of) the vertices of the graph i.e., a partition of $[m_1]$. Clearly, 
for every  $i\neq l$ with $(i,l)\subseteq (R_1\times R_1)\cup(R_2\times R_2)$  we have
\begin{equation}
\label{eq:prooflem1}
\text{supp}(a_i)\cap\text{supp}(a_l)=\emptyset.
\end{equation} 

It suffices to prove that every collection of rows of 
\[M:=\left(\begin{array}{l}A\\H\end{array}\right)\] 
can be split into two parts such that the sum of the rows in one part minus the sum of the rows in the other part is a vector with entries in $\{0,\pm 1\}$; see e.g. \cite[Theorem~19.3]{schrijver}.
Hence, let $I\in [m_1]$ and $L\in [m_2]$ denote such a collection of row indices of~$A$ and~$H,$ respectively. For  $R \subseteq [m_1]$ we set
\[
L(R):=\{l \in L \::\: \exists i\in R \text{ with } \text{supp}(h_l)\subseteq \text{supp}(a_i)\}.\] 
Of course, the sets $(R_1\cap I)$ and $(R_2\cap I)$ form a partition of $I,$ i.e.,
\begin{equation}\label{eq:partition1}
I=(R_1\cap I) \:\:\cup\:\: (R_2\cap I)\quad \text{and} \quad (R_1\cap I) \:\:\cap\:\: (R_2\cap I)=\emptyset.
\end{equation}
Also, it follows from~(i), (ii),~\eqref{eq:h},~\eqref{eq:prooflem1}, and the fact that the graph is simple
that 
\begin{equation}\label{eq:partition}
\begin{aligned}
L&=L(R_1\cap I)\:\:\cup\:\: L(R_1\setminus I)\:\:\cup\:\: L(R_2\cap I)\:\:\cup\:\: L(R_2\setminus I) \quad \text{and }\\
&L(R_1\cap I), \:\:\ L(R_1\setminus I), \:\:  L(R_2\cap I),\:\: L(R_2\setminus I)\quad \text{are pairwise disjoint,}
\end{aligned}
\end{equation}
i.e., the sets constitute a partition of $L$.

By~(i),~(ii),~\eqref{eq:prooflem1}, ~\eqref{eq:partition1} and~\eqref{eq:partition} we have \[\sum_{i \in R_1\cap I}a_i +\sum_{l\in L(R_1\setminus I)}h_l \in\{0,1\}^{n_1},\] 
hence \[\sum_{i \in R_1\cap I}a_i +\sum_{l\in L(R_1\setminus I)}h_l-\sum_{l\in L(R_1\cap I)}h_l\in\{0,1\}^{n_1}\] 
Similarly,
\[
\sum_{i \in R_2\cap I}a_i+\sum_{l\in L(R_2\setminus I)}h_l-\sum_{l\in L(R_2\cap I)}h_l\in\{0,1\}^{n_1}.
\]
Therefore,
\[\sum_{i \in R_1\cap I}a_i +\sum_{l\in L(R_1\setminus I)}h_l-\sum_{l\in L(R_1\cap I)}h_l -\left(\sum_{i \in R_2\cap I}a_i+\sum_{l\in L(R_2\setminus I)}h_l-\sum_{l\in L(R_2\cap I)}h_l\right)\in\{0,\pm1\}^{n_1},\]
i.e., the collection $I$ and $L$ of rows of $A$ and $H,$ respectively, is split into two parts such that the sum of the rows in one part minus the sum of the rows in the other part is a vector with entries in $\{0,\pm 1\}$.  
\end{proof}

Now we present a polynomial-time algorithm for \textsc{Rec}$(k,\nu,2)$ when $2\le k\le \nu$.

\begin{proof}[Proof of Theorem~\ref{thm:tract}(\ref{thm:tract:iii})]
We claim that the problem can be formulated as that of finding an integer solution to $\{x\::\: Mx \leq z, \: x\geq 0\}$ with a totally unimodular matrix~$M$ and integral right-hand side~$z,$ showing that the problem is solvable in polynomial time; see \cite{gritzmann97}, \cite[Thm.~16.2]{schrijver}.

Assembling the variables $\xi_{i,j},$ for $(i,j)\in  [m]\times[n]$ into an $mn$-dimensional vector and rephrasing the row and column sum constraints in matrix form  $Ax= (r^T,c^T)^T,$ where $r$ contains the~$r_j$'s and~$c$ contains the~$c_i$'s,~$A$ is the node-edge incidence matrix of a bipartite graph, and hence totally unimodular. 

As the pattern constraints ensure that no row of a block contains two 1's, and since $v(i,j)=0$ or $v(i,j)\geq k,$ we can (equivalently) replace the block and pattern constraints by the \emph{box constraints}
\begin{equation}\label{W2o}
\sum_{\mathclap{(p,q)\in W(i,j+l)}}\xi_{p,q}\leq \min\{1,v(i,j)\}, \qquad (i,j) \in C(m,n,k), \quad l\in [k-1]_0,
\end{equation} where $W(i,j+l):= B_k(i,j)\cap([m]\times\{j+l\}).$ 

We can rephrase the box constraints in matrix form $Hx\leq v,$ with a binary matrix~$H$ and a vector~$v$ containing the respective right-hand sides. 

Since each box $W(i,j+l)$ is contained in a row of $[m]\times [n]$ the condition in Lemma~\ref{prop:totallyunimodular}\eqref{lem1:prop2} is satisfied. Also, any two of these boxes are disjoint since the blocks are disjoint. Hence the condition in Lemma~\ref{prop:totallyunimodular}\eqref{lem1:prop3} is satisfied. Therefore, by Lemma~\ref{prop:totallyunimodular},  the matrix
\[
\left(\begin{array}{l}A\\H\end{array}\right)
\]
is also totally unimodular. Appending $-A$ and the identity matrix $E$ yields again a totally unimodular matrix, hence 
\[
M:=
\left(\begin{array}{r}A\\H\\-A\\E\end{array}\right)
\] is totally unimodular. Since the corresponding right-hand side vector~$z=(r^T,c^T,v^T,-r^T,-c^T,\1^T)^T$ is integral, the instance can be solved by linear programming.
\end{proof}

\section{Intractability results: $\NP$-hardness}\label{sec-intractability}

This section contains the proof of Theorem \ref{thm:NP}. We begin with the $\mathbb{N}\mathbb{P}$-hardness of $\textsc{Rec}(k,1,1)$ for $k\geq2$.

\begin{proof}[Proof of Theorem~\ref{thm:NP}(\ref{thm:NP:i})]
It suffices to show the result for $k=2$ since the $\mathbb{N}\mathbb{P}$-hardness for larger~$k$ can be inferred from that for $k=2$ by setting the row and column sums $r_{j+l}$ and $c_{i+l}$ to zero  for every $(i,j)\in C(m,n,k)$ and $l \in \{3,\dots,k-1\};$ see \cite{agsuperresolution}.

So, let $k=2.$ We use a transformation from the following problem \textsc{3-color tomography}, which is $\mathbb{N}\mathbb{P}$-hard by~\cite{duerr-Guinez-matamala-12}. (Note that the third color is just ``blank.'')

\begin{center}
\begin{minipage}{0.95\textwidth}
\textsc{3-color tomography}\\[-3ex]

\hspace*{2ex}\begin{minipage}{0.5\textwidth}
\begin{alignat*}{6}
&\textnormal{Instance:}\quad&& \omit \rlap{$\displaystyle m,n \in \mathbb{N},$ $r^{(a)}_1,\dots,r^{(a)}_n, c^{(a)}_1,\dots,c^{(a)}_m\in \mathbb{N}_0$ for $a\in[2].$}\\[1.2ex]
&\textnormal{Task:}\quad&& \omit \rlap{ Find $\xi_{p,q}^{(a)}\in\{0,1\},\:\:$ $(p,q)\in[m]\times[n],$ $a \in [2],$ with }\\
\displaystyle &&&\sum_{p\in[m]} \xi^{(a)}_{p,q}&&=r_q^{(a)},\qquad &&  q\in[n], a\in[2],&&\textnormal{(row sums)} \\ 
\displaystyle &&&\sum_{q\in[n]} \xi^{(a)}_{p,q}&&=c_p^{(a)},&&  p\in[m], a\in[2],&&\textnormal{(columns sums)} \\ 
\displaystyle &&&\xi_{p,q}^{(1)}+\xi_{p,q}^{(2)}&&\leq1, &&  (p,q)\in [m]\times[n]\qquad&&\textnormal{(disjointness condition),}\\[1.2ex]
&&& \omit \rlap{or decide that no such solution exists.}
\end{alignat*}
\end{minipage}
\end{minipage}
\end{center}

Let $\mathcal{I}=(m,n, r^{(1)}_1,\dots,r^{(1)}_n, c^{(1)}_1,\dots,c^{(1)}_m, r^{(2)}_1,\dots,r^{(2)}_n, c^{(2)}_1,\dots,c^{(2)}_m)$ denote an instance of \textsc{3-color tomography}. We are setting up an instance~$\mathcal{I}'$ of our reconstruction problem by defining the row and column sums $r_1,\dots,r_{2n}$ and $c_1,\dots,c_{2m}$, respectively. 

The block and pattern constraints ensure that there are only three possibilities for setting the 1's in each block. They are shown in Figure~\ref{fig:3colors}.
\begin{figure}[htb] 
\centering
\includegraphics[width=0.3\textwidth]{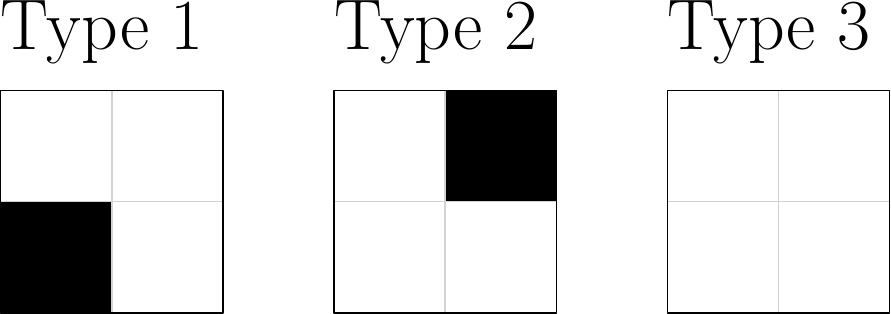}
\caption{The three possible types of blocks in solutions of $\textsc{Rec}(2,1,1)$ instances.}\label{fig:3colors}
\end{figure}

Note that the types can be viewed as representing three colors, which are counted by different row and column sums. The row and column sums with odd indices count type~1 blocks, the other sums count type~2 blocks. 

Setting
\begin{alignat*}{4}
\displaystyle r_{2q+a-2}&:=r^{(a)}_q,\quad && q\in[n],\:\: &a\in[2], \\
\displaystyle c_{2p+a-2}&:=c^{(a)}_p,\quad && p\in[m],\:\: &a\in[2],
\end{alignat*} and using  the correspondences 
\begin{alignat*}{4}
\displaystyle {\xi_{p,q}^{(a)}}^*=1 \quad &&\Leftrightarrow \quad (\xi_{2p-1,2q-1}^*,\xi_{2p-1,2q}^*,\xi_{2p,2q-1}^*,\xi_{2p,2q}^*) \textnormal{ is of type~a}, &&\qquad \textnormal{for } a \in[2],\\
\displaystyle {\xi_{p,q}^{(1)}}^*={\xi_{p,q}^{(2)}}^*=0 \quad &&\Leftrightarrow \quad (\xi_{2p-1,2q-1}^*,\xi_{2p-1,2q}^*,\xi_{2p,2q-1}^*,\xi_{2p,2q}^*) \textnormal{ is of type~3},&&
\end{alignat*} we conclude that $\mathcal{I}$ admits a solution if, and only if, $\mathcal{I}'$ admits a solution.
\end{proof}

Next we turn to the $\NP$-hardness of \textsc{Rec}$(k,2,0)$ for $k\geq 2$. Again it suffices to prove the result for~$k=2.$ 

\begin{proof}[Proof of Theorem~\ref{thm:NP}(\ref{thm:NP:ii})] 
The general structure of the proof follows that of the $\mathbb{N}\mathbb{P}$-hardness result under data uncertainty given in \cite{agsuperresolution}. However, some adaptations and additions are required as will be detailed below. 
 
As in~\cite{agsuperresolution} we give a transformation from the $\mathbb{N}\mathbb{P}$-hard problem \textsc{1-In-3-SAT} \cite{cook1971}, which asks for a satisfying truth assignment that sets exactly one literal true in each clause of a given Boolean formula in conjunctive normal form where all clauses contain three literals (involving three different variables). 

For a given instance of \textsc{1-In-3-SAT} a ``circuit board'' is constructed that contains an initializer, several connectors, and clause chips.  The general structure of the circuit board from the proof of~\cite{agsuperresolution} and its modification, which contains additional connectors and initializers,  are outlined in Figure~\ref{fig:P2layoutleq}.

\begin{figure}[htb] 
\centering
\subfigure[]{\includegraphics[width=0.4\textwidth]{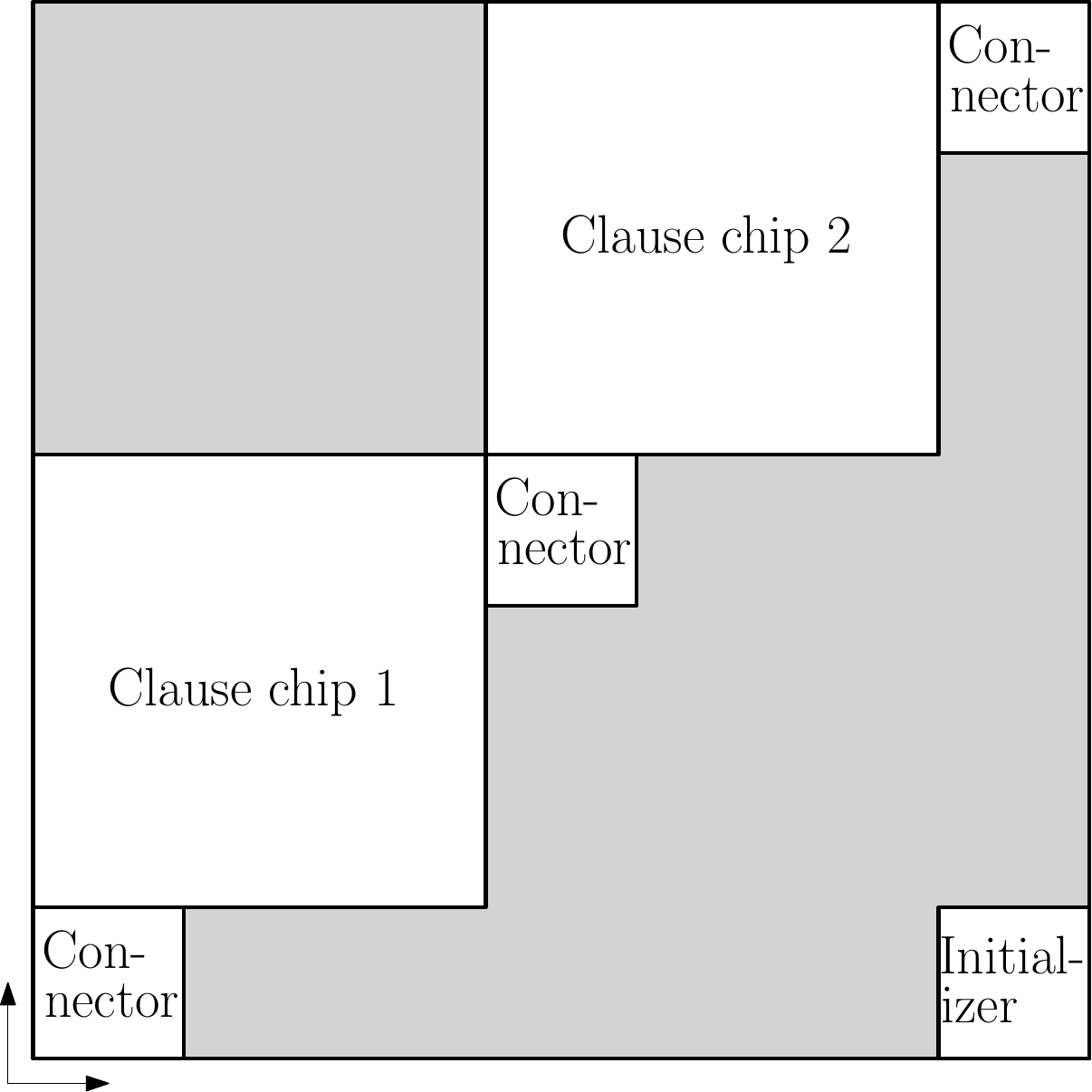}}\hspace*{4ex}
\subfigure[]{\includegraphics[width=0.4\textwidth]{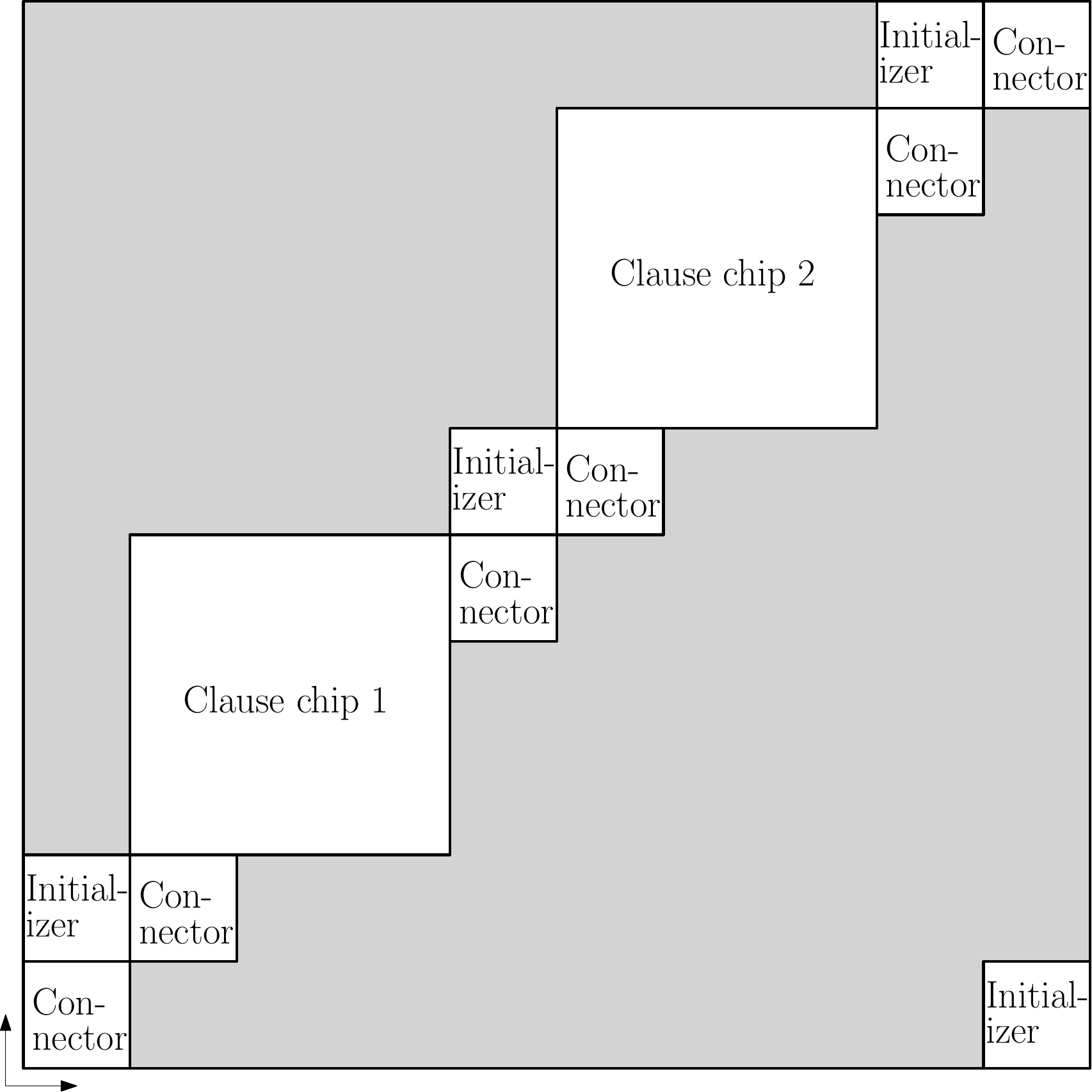}}
\caption{The general layout of (a) the original circuit board and (b) the modified circuit board, here for two clauses.}\label{fig:P2layoutleq}
\end{figure}

The initializers contain, for every variable $\tau_t,$ $t\in[T],$ so-called \emph{$\tau_t$-chips}, while the connectors contain \emph{$\lnot\tau_t$-chips.} The clause chips are more complex as they consist of two \emph{collectors}, two \emph{verifiers}, and a \emph{transmitter}. 
The initializer holds a truth assignment for the variables $\tau_1,\dots,\tau_T$ of the given instance~$\mathcal{I}$ of \textsc{1-In-3-SAT}. The Boolean values \textsc{True} and \textsc{False} are encoded in $\tau_t$-chips, $t\in[T],$ by the type~1 and~2 blocks, respectively, that are shown in Figure~\ref{fig:P2types}.

\begin{figure}[htb] 
\centering
\includegraphics[width=0.17\textwidth]{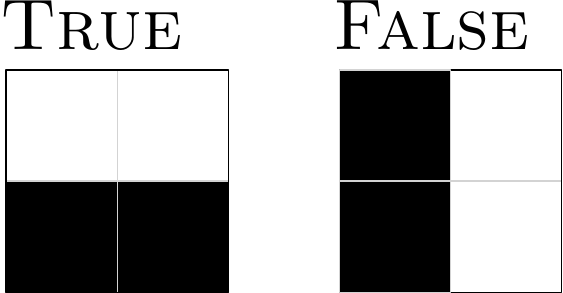}
\caption{Boolean values in $\tau_t$-chips.}\label{fig:P2types}
\end{figure}

The proof in~\cite{agsuperresolution} shows how the truth assignment is transmitted through the circuit board, and how the verifier chips indeed check the feasibility of the given \textsc{1-In-3-SAT} instance. Figure~\ref{fig:P2example2} depicts a specific problem instance and a corresponding solution for the original construction of~\cite{agsuperresolution} .

\begin{figure}[htb] 
\centering
\subfigure[]{\includegraphics[width=0.47\textwidth]{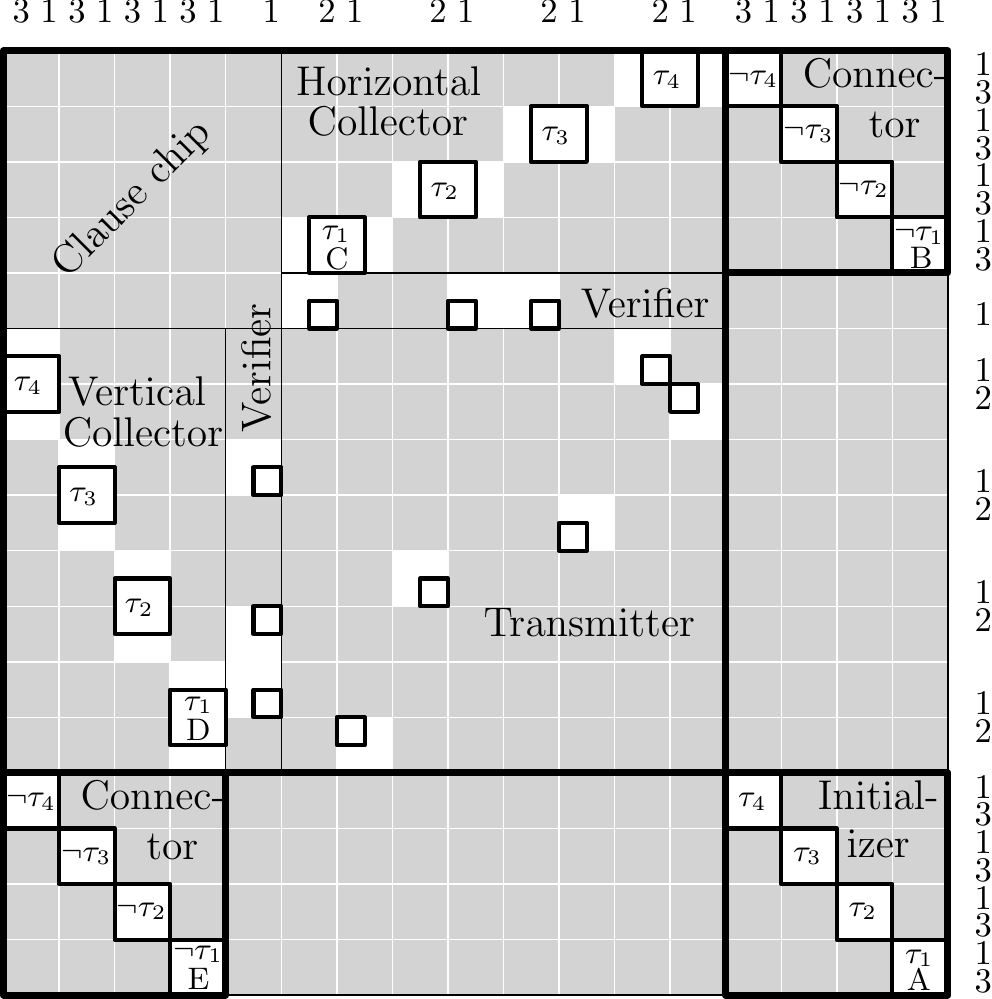}}\hfill
\subfigure[]{\includegraphics[width=0.47\textwidth]{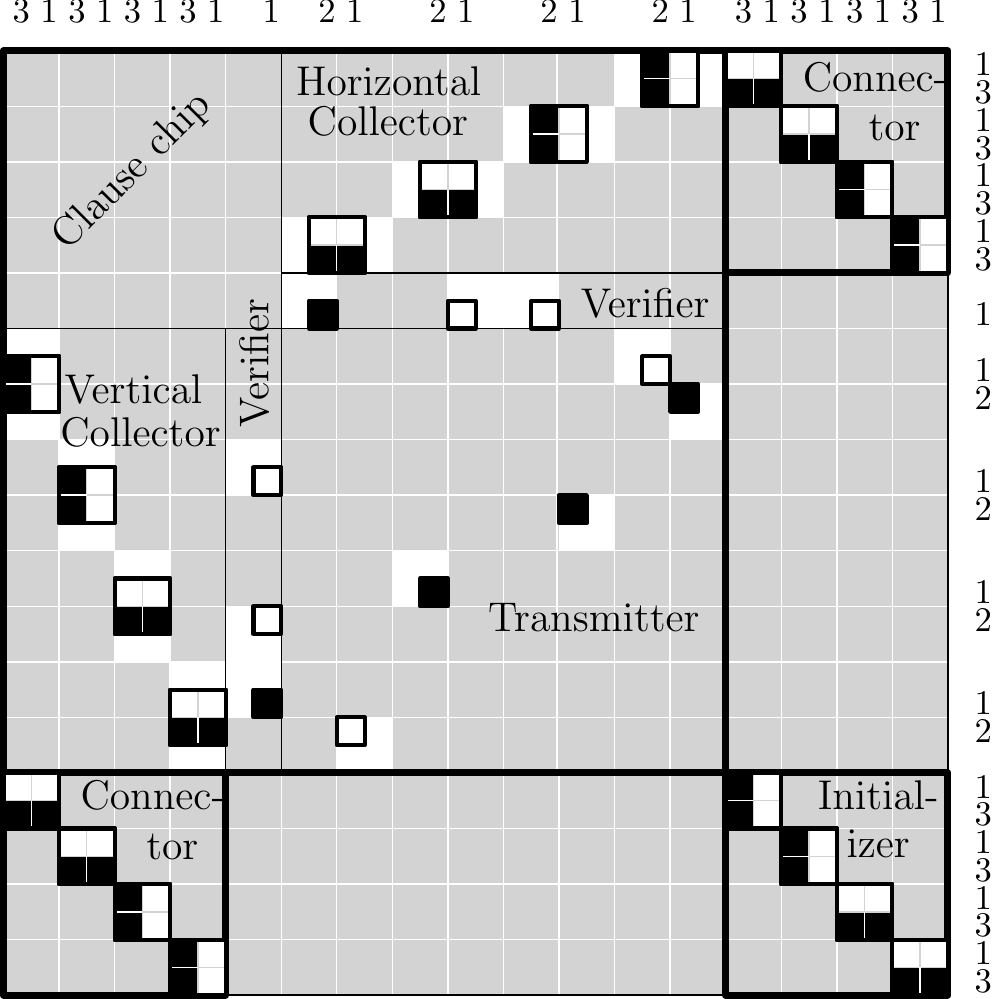}}
\caption{(\cite{agsuperresolution}) Original transformation from \textsc{1-In-3-SAT} for the instance  that involves the single clause $\tau_1\vee \lnot\tau_2 \vee \tau_3.$  (a) The circuit board. By setting to zero suitable row and column sums it is ensured that the non-zero components $\xi^*_{p,q}$ of a solution are only possible in the bold-framed boxes within the white blocks in the clause chip, connectors, and the initializer.  (b) A  solution $x^*$ (non-zero components~$\xi^*_{p,q}$ are depicted as black pixels), representing the Boolean solution $(\tau^*_1,\tau^*_2,\tau^*_3,\tau^*_4)=(\textsc{True},\textsc{True},\textsc{False},\textsc{False}).$  }\label{fig:P2example2}
\end{figure}

The  proof in~\cite{agsuperresolution} employs three different types of block constraints. For every block $B_2(i,j),$ $(i,j)\in C(m,n,2),$ there is one of the following constraints:
\begin{alignat*}{4}
\sum_{(p,q)\in B_2(i,j)}\xi_{p,q}&= 0, \qquad &&\textnormal{($(=,0)$-block constraint)}\\
\sum_{(p,q)\in B_2(i,j)}\xi_{p,q}&= 2,\qquad &&\textnormal{($(=,2)$-block constraint)}\\
\sum_{(p,q)\in B_2(i,j)}\xi_{p,q}&\leq 2, \qquad &&\textnormal{($(\approx,1)$-block constraint)};
\end{alignat*} the names in parenthesis are those from \cite{agsuperresolution} (even though~$(\leq,2)$ may seem more natural here in the third case).

As for binary variables, the constraint $\sum_{(p,q)\in B_2(i,j)}\xi_{p,q}= 0$ is equivalent to $\sum_{(p,q)\in B_2(i,j)}\xi_{p,q}\leq 0,$ it suffices to adapt the construction in such a way that no $(=,2)$-block constraints are required. In fact, the $(=,2)$-block constraints in the original construction are only employed for the $\lnot\tau_t$-chips, $t\in[T],$ contained in the connectors. 

For each $\lnot\tau_t$-chip, $t\in[T],$ in a connector, we replace the $(=,2)$-block constraints by a $(\approx,1)$-block constraint. Then,  between any two clause chips we insert an additional copy of a connector and an initializer each. These copies are placed as indicated in Figure~\ref{fig:P2layoutleq}(b). The additional row and column sums are prescribed again to values~$1$ and~$3$ in alternation.  
The new copies ensure that each $\lnot\tau_t$-chip, $t\in[T],$ is contained in a horizontal or vertical strip that contains a $\tau_t$-chip of an initializer. The corresponding row or column sums have values~$1$ and~$3,$ and since we have $(\approx,1)$-block constraints for both chips, we can distribute the $1+3=4$ ones only in such a way that each of the two chips contains exactly two 1's. In other words, each $\lnot\tau_t$-chip, $t\in[T],$ is required to contain exactly two 1's. Figure~\ref{fig:change} shows the solution of our correspondingly adapted circuit board from~Figure~\ref{fig:P2example2}.

\begin{figure}[htb] 
\centering
\includegraphics[width=0.6\textwidth]{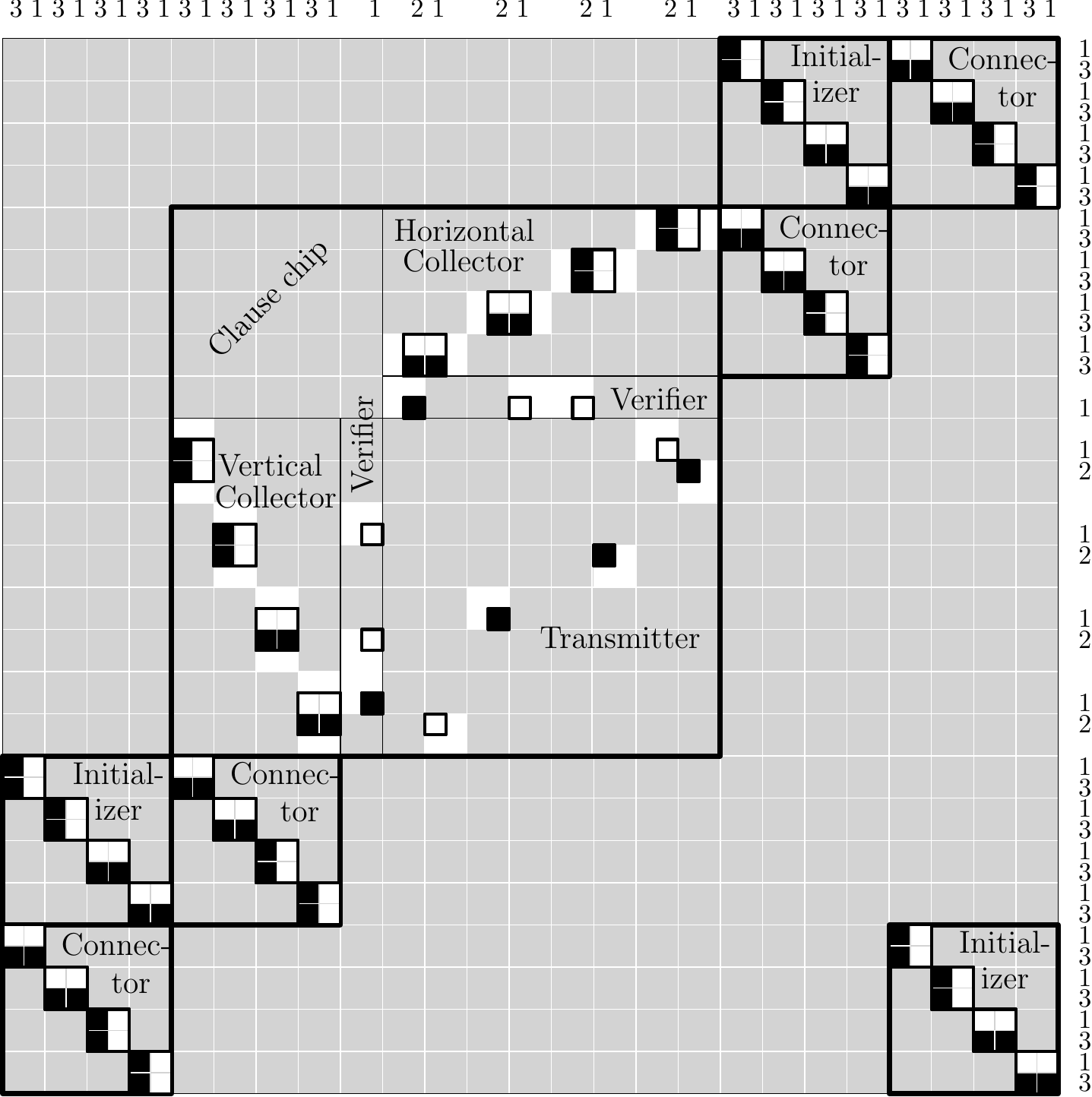}\vspace*{1ex}
\caption{Transformation from \textsc{1-In-3-SAT} for the instance  that involves the single clause $\tau_1\vee \lnot\tau_2 \vee \tau_3.$  A  solution $x^*$ is shown (non-zero components~$\xi^*_{p,q}$ are depicted as black pixels), representing the Boolean solution $(\tau^*_1,\tau^*_2,\tau^*_3,\tau^*_4)=(\textsc{True},\textsc{True},\textsc{False},\textsc{False}).$  }\label{fig:change}
\end{figure}

With this adaptation the proof from~\cite{agsuperresolution} carries over. As a service to the reader we explain how the truth assignment is transmitted in our particular example; see Figure~\ref{fig:transmission}. The formal proof that, in particular, this transmission and, in general, the complete transformation works as needed, follows exactly as in~\cite{agsuperresolution}.

\begin{figure}[htb] 
\centering
\includegraphics[width=0.6\textwidth]{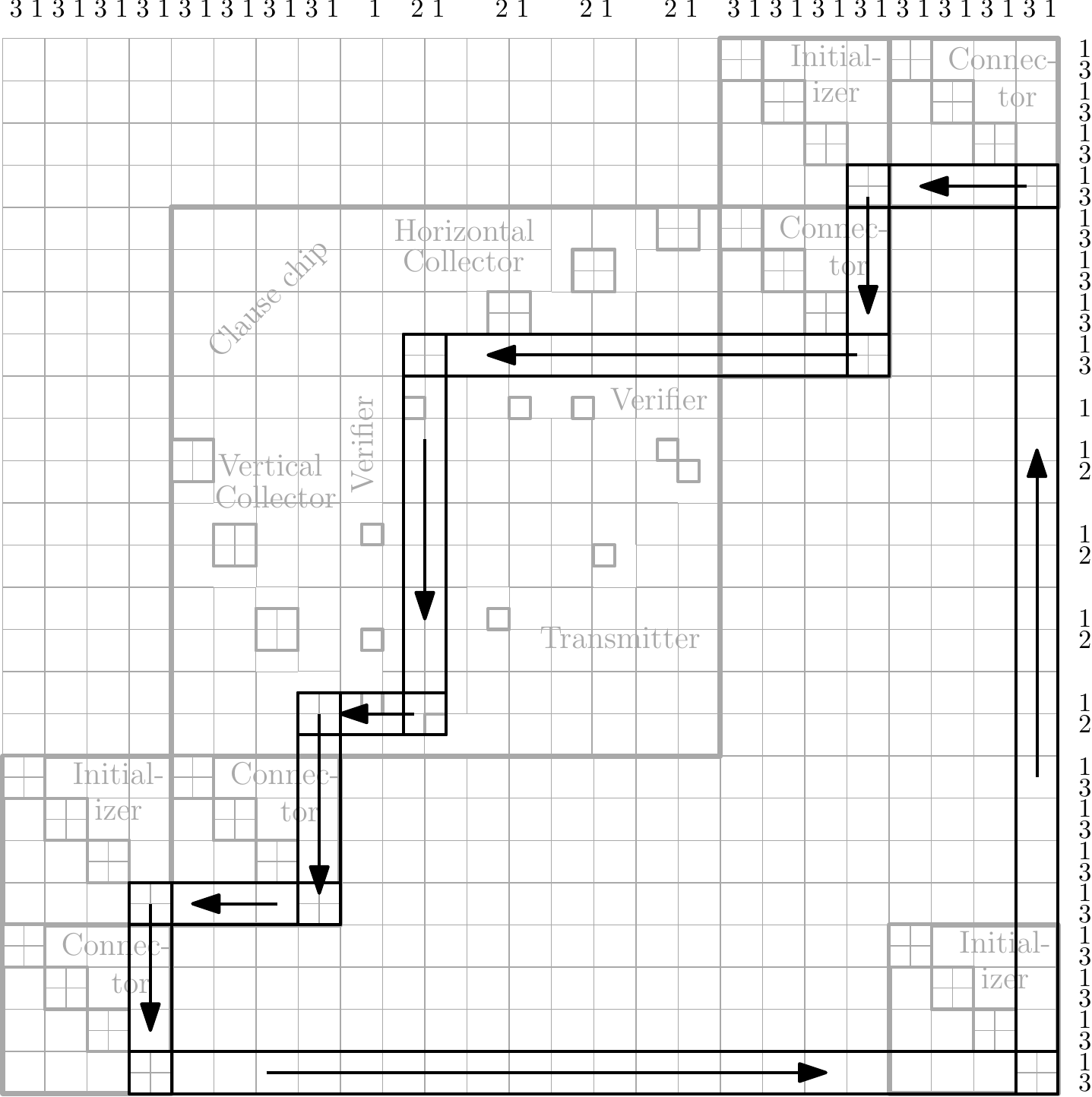}\vspace*{1ex}
\caption{Transmission of the Boolean values for~$\tau_1$ (for the example shown in~Figure~\ref{fig:change}). }\label{fig:transmission}
\end{figure}

Suppose the $\tau_1$-chip of the initializer in the lower right corner of the circuit board shown in Figure~\ref{fig:transmission} is of type~1. 
The truth assignment is vertically transmitted from the initializer to a connector, which negates the Boolean value as the X-rays are set to~$1$ and~$3$ in alternation. Then, this Boolean value is horizontally transmitted to an initializer (again negated, so that we have the original values back), then vertically to a connector (again negated). From here the Boolean value is transmitted horizontally to the first clause chip, where it enters and exits, yet again negated, through a collector. In the part between the two collectors, the Boolean values for the literals appearing in the first clause take a path through the verifiers that ensure that exactly one literal is set to true; the values for the other literals are just transmitted (again with negation). This process continues in a similar way for the remaining clauses. We end up with feasible solution for our instance of \textsc{Rec}$(k,2,0)$ if, and only if, there is a satisfying truth assignment for our given instance of \textsc{1-In-3-SAT}.
\end{proof}

\section{Final Remarks}\label{sect:finalremarks}
Let us first point out that the question of inferring information about an otherwise unknown object from observations ``under the microscope,'' i.e., through certain windows, is fundamental for a wide range of applications. For corresponding results on the scanning of binary or integer matrices; see e.g.~\cite{frosinimicroscope07, frosininivatrinaldi08}. In our terminology, \cite{frosinimicroscope07} studies the task of reconstructing binary matrices from their number of~$1$'s in each windows of a fixed size. 

We will close by introducing a more general problem and stating some 
implications of our main Theorems \ref{thm:tract} and \ref{thm:NP}.

As before, $m,n,k\in\mathbb{N}$. Further, for $V_\leq, V_\geq, V_=\subseteq \mathbb{N}_0$ we set 
\[R(V_\leq,V_\geq,V_=):=(\{\leq\}\times V_\leq)\cup(\{\geq\}\times V_\geq)\cup(\{=\}\times V_=).\]
Also, let for $(i,j)\in\mathbb{N}_0^2$ 
\[ W_{k}(i,j):=(i,j)+{[k-1]^2_0},
\] 
and define for $L\subseteq \mathbb{N}^2$ 
\[
C(m,n,k,L):=\left\{(p,q)\in L \::\: W_{k}(p,q)\subseteq [m]\times[n] \right\}.
\] 

For $L\subseteq \mathbb{N}^2,$ $(i,j)\in C(m,n,k,L),$ and $x=(\xi_{i,j})_{i\in[m],j\in[n]}$ 
we set 
\[\textnormal{pat}_{k}(x,i,j):=\{(p,q)\in W_{k}(i,j)\::\: \xi_{p,q}\neq0\}-(i,j).\]

In addition to the patterns $P(k,0),P(k,1)$, and $P(k,2)$ we consider, for $k\geq 2$,
a fourth one 
\[P(k,3):=\{M\in2^{[k-1]_0^2} \::\: |M\cap\left([k-1]_0\times\{j\}\right)|\geq k-1 \textnormal{ for all } j\in [k]_0\}. \]
Note that $P(k,3)$ can be viewed as the colored-inverted version of $P(k,2).$

Now, let $V_\leq,V_\geq,V_=\subseteq \mathbb{N}_0,$ $t\in[3]_0,$ and $L\subseteq \mathbb{N}^2$. Then we define the following problem.

\begin{center}
\begin{minipage}{0.95\textwidth}
$\textsc{WRec}(k,V_\leq,V_\geq,V_=,t,L)$\\[-3ex]

\hspace*{1ex}\begin{minipage}{0.98\textwidth}
\begin{alignat*}{6}
&\textnormal{Instance:}\quad&& \omit \rlap{$\displaystyle m,n \in k\mathbb{N},$}\\
&&&\omit \rlap{$\displaystyle r_1,\dots,r_n \in \mathbb{N}_0,$}&&&&&& \textnormal{(row sum measurem.)} \\
&&&\omit \rlap{$\displaystyle c_1,\dots,c_m \in \mathbb{N}_0,$}&&&&&& \textnormal{(col. sum measurem.)} \\
&&&\omit \rlap{$(\sim_{i,j},v_{i,j})\in R(V_\leq,V_\geq,V_=),$} &&&& (i,j)\in C(m,n,k,L), &&  \textnormal{(window measurem.)}\\[1.2ex]
&\textnormal{Task:}\quad&& \omit \rlap{Find  $\xi_{p,q}\in\{0,1\},$ \:$(p,q)\in[m]\times[n],$ with }\\
\displaystyle &&&\sum_{\mathclap{p\in[m]}} \xi_{p,q}&&=r_q,&&   q\in[n], &&\textnormal{(row sums)} \\ 
\displaystyle &&&\sum_{q\in[n]} \xi_{p,q}&&=c_p,&&  p \in [m], &&\textnormal{(column sums)}\\  
\displaystyle &&&\sum_{\mathclap{(p,q)\in W_{k}(i,j)}} \xi_{p,q}&& \sim_{i,j} v(i,j),&&  (i,j)\in C(m,n,k,L), \quad &&\textnormal{(window constraints)}\\
\displaystyle &&&\textnormal{pat}_{k}(x,i,j)&&\in P(k,t), \qquad \qquad \qquad &&  (i,j)\in C(m,n,k,L),\qquad  &&\textnormal{(pattern constraints),}\\[1.2ex]
&&& \omit \rlap{or decide that no such solution exists.}
\end{alignat*}
\end{minipage}
\end{minipage}
\end{center}

Tables~\ref{tableP01} and~\ref{tableNP01} below list tractability and intractability results for 
$\textsc{WRec}(k,V_\leq,V_\geq,V_=,t,L)$, respectively, which are simple corollaries to our main
Theorems \ref{thm:tract} and \ref{thm:NP}. In fact, they are either just reinterpretations of these results or rely, in addition, on one or two of the following simple transformation principles (T1), (T2), or (T3).

(T1) is the trivial observation that rather than reconstructing the ones in an image, we may reconstruct the zeros instead. More precisely, this process of ``color inversion'' works as follows. Let $\alpha$ denote the number of positions in the window $W_{k}(0,0)$.
Then, we associate a given instance $\mathcal{I}$ of 
$\textsc{WRec}(k,V_\leq,V_\geq,V_=,t,L)$ with
the instance $\mathcal{I}'$ of $\textsc{WRec}(k,V'_\leq,V'_\geq,V'_=,t,L)$ where $V'_\leq:=\{\alpha-l\::\:l\in V_\leq\},$  $V'_\geq:=\{\alpha-l\::\:l\in V_\geq\},$ and $V'_=:=\{\alpha-l\::\:l\in V_=\}$,
$r_{j}':=m-r_{j},$ $j\in[n],$ $c_{i}':=n-c_{i},$ $i\in[m],$ 
$P'(k,t):=\left\{W_{k}(0,0)\setminus M\::\: M\in P(k,(t)\right\},$ and, for all $(i,j)\in C$,
$(\sim_{i,j},v(i,j))$ is replaced by 
\[
\left\{\begin{array}{lll} 
(\geq,\alpha-v(i,j))&:& \sim_{i,j}\in\{\leq\},\\
(\leq,\alpha-v(i,j))&:& \sim_{i,j}\in\{\geq\},\\
(=,\alpha-v(i,j))&:& \sim_{i,j}\in\{=\}.
 \end{array}\right. 
\]  
Clearly, the problems $\textsc{WRec}(k,V_\leq,V_\geq,V_=,t,L)$  and $\textsc{WRec}(k,V'_\leq,V'_\geq,V'_=,t,L)$  lie in the same complexity class.

The transformation (T2) adds empty rows and columns to an 
instance $\mathcal{I}$ of $\textsc{WRec}(2,V_\leq,V_\geq,V_=,0,L)$ 
to extend $\mathbb{N}\mathbb{P}$-hardness results for $k=2$ to higher values of $k$. More precisely, let $m':=mk/2,$ $n':=nk/2,$ and set for $(i,j)\in ([m]\times[n])\cap C(m,n,2)$ and $l\in[k]$
\begin{align*}
r'_{\frac{k}{2}(j-1)+l}&:=\left\{\begin{array}{lll}r_{j+l}&:&l\in[2],\\ 0&:&\textnormal{otherwise,}\end{array}\right.\\
c'_{\frac{k}{2}(i-1)+l}&:=\left\{\begin{array}{lll}c_{i+l}&:&l\in[2],\\ 0&:&\textnormal{otherwise,}\end{array}\right.\\
(\sim'_{i,j},v'(k(i-1)/2+1,k(j-1)/2+1)))&:=(\sim_{i,j},v(i,j)).
\end{align*} 
This defines an instance $\mathcal{I}'$ of $\textsc{WRec}(k,V_\leq,V_\geq,V_=,0,L)$ with row sums $r'_1,\dots,r'_{n'},$ column sums $c'_1,\dots,c'_{m'},$ and block constraints $(\sim'_{i,j},v'(i,j)),$ $(i,j)\in C(m',n',3,k,L).$  Clearly, the instance~$\mathcal{I}$  admits a solution if, and only if,~$\mathcal{I}'$  admits a solution (by filling/extracting the~$2\times2$-blocks of~$\mathcal{I}$ into/from the~$k\times k$-blocks of $\mathcal{I}'$). As this is a polynomial-time transformation, $\mathbb{N}\mathbb{P}$-hardness of $\textsc{WRec}(2,V_\leq,V_\geq,V_=,0,L)$ implies $\mathbb{N}\mathbb{P}$-hardness of $\textsc{WRec}(k,V_\leq,V_\geq,V_=,0,L).$ 

In the final transformation (T3) additional rows and columns are filled by ones. For a  given instance $\mathcal{I}$ of $\textsc{WRec}(2,V_\leq,V_\geq,V_=,0,L),$ and  $k\geq2$, let $m':=mk/2$, $n':=nk/2$, and set for $(i,j)\in ([m]\times[n])\cap C(m,n,2)$ and $l\in[k]$
\begin{align*}
r'_{\frac{k}{2}(j-1)+l}&:=\left\{\begin{array}{lll}r_{j+l}&:&l\in[2],\\ m'&:&\textnormal{otherwise,}\end{array}\right.\\
c'_{\frac{k}{2}(i-1)+l}&:=\left\{\begin{array}{lll}c_{i+l}&:&l\in[2],\\ n'&:&\textnormal{otherwise,}\end{array}\right.\\
(\sim'_{i,j},v'(k(i-1)/2+1,k(j-1)/2+1)))&:=(\sim_{i,j},k^2+v(i,j)-4).
\end{align*} 
This defines an instance $\mathcal{I}'$ of $\textsc{WRec}(k,V'_\leq,V'_\geq,V'_=,0,L)$ with row sums $r'_1,\dots,r'_{n'},$ column sums $c'_1,\dots,c'_{m'},$  block constraints $(\sim'_{i,j},v'(i,j)),$ $(i,j)\in C(m',n',3,k,L),$  $V'_\leq:=\{k^2+v-4:v\in V_\leq\},$ $V'_\geq:=\{k^2+v-4:v\in V_\geq\},$ and $V'_=:=\{k^2+v-4:v\in V_=\}.$ Clearly, the instance~$\mathcal{I}$  admits a solution if, and only if,~$\mathcal{I}'$  admits a solution (by filling/extracting the~$2\times2$-blocks of~$\mathcal{I}$ into/from the~$k\times k$-blocks of $\mathcal{I}'$). As this is a polynomial-time transformation, $\mathbb{N}\mathbb{P}$-hardness of $\textsc{WRec}(2,V_\leq,V_\geq,V_=,0,L)$ implies $\mathbb{N}\mathbb{P}$-hardness of $\textsc{WRec}(k,V'_\leq,V'_\geq,V'_=,0,L).$

{\footnotesize
\begin{table}[htb]
\begin{tabular}{c@{\hskip 3ex}c@{\hskip 3ex}c@{\hskip 3ex}c@{\hskip 3ex}c@{\hskip 3ex}c@{\hskip 3ex}l@{\hskip 3ex}}\toprule
$k$&$V_{\leq}$&$V_{\geq}$&$V_=$&$t$&$L$&Reference\\\midrule
1&$\mathbb{N}_0$&$\emptyset$&$\emptyset$&$[2]_0$&$\mathbb{N}^2$&Thm.~\ref{thm:tract}(\ref{thm:tract:i})\\
2&$\{0\}$&$\{4\}$&$[4]_0$&0&$(2\mathbb{N}_0+1)^2$&\cite{agsuperresolution}\\
$\geq2$&$\{0,1\}$&$\{k^2\}$&$\{0,k^2\}$&$0$&$(k\mathbb{N}_0+1)^2$&Thm.~\ref{thm:tract}(\ref{thm:tract:ii})\\
$\geq2$&$\{0\}$&$\{k^2-1,k^2\}$&$\{0,k^2\}$&$0$&$(k\mathbb{N}_0+1)^2$&Thm.~\ref{thm:tract}(\ref{thm:tract:ii}) \& (T1)\\
$\geq2$&$\{0\}\cup\{\nu:\nu\geq k\}$&$\emptyset$&$\{0\}$&$2$&$(k\mathbb{N}_0+1)^2$&Thm.~\ref{thm:tract}(\ref{thm:tract:iii})\\
$\geq2$&$\emptyset$&$\{k^2\}\cup[k(k-1)]_0$&$\{k^2\}$&$3$&$(k\mathbb{N}_0+1)^2$&Thm.~\ref{thm:tract}(\ref{thm:tract:iii}) \& (T1)\\
\bottomrule\\[-.1ex]
\end{tabular}
\caption{Polynomial-time solvable problems $\textsc{WRec}(k,V_\leq,V_\geq,V_=,t,L).$} \label{tableP01}
\end{table}
}

{\footnotesize
\begin{table}[htb]
\begin{tabular}{c@{\hskip 4ex}c@{\hskip 4ex}c@{\hskip 4ex}c@{\hskip 4ex}c@{\hskip 4ex}c@{\hskip 4ex}l@{\hskip 4ex}}\toprule
$k$&$V_{\leq}$&$V_{\geq}$&$V_=$&$t$&$L$&Reference\\\midrule
$\geq2$&$\emptyset$&$\{0\}$&$\{0,1,2\}$&0&$\mathbb{N}^2$&\cite{agsuperresolution}
\refstepcounter{rownumber}\label{tableNP01:14}\\
$\geq2$&$\{0,1\}$&$\emptyset$&$\emptyset$&1&$(k\mathbb{N}_0+1)^2$&Thm.~\ref{thm:NP}(\ref{thm:NP:i})
\refstepcounter{rownumber}\\
$\geq2$&$\{0,2\}$&$\emptyset$&$\emptyset$&0&$(k\mathbb{N}_0+1)^2$&Thm.~\ref{thm:NP}(\ref{thm:NP:ii})
\refstepcounter{rownumber}\label{tableNP01:17}\\
$\geq2$&$\{k^2-4,k^2-2\}$&$\emptyset$&$\emptyset$&0&$(k\mathbb{N}_0+1)^2$&Thm.~\ref{thm:NP}(\ref{thm:NP:ii}) \& (T3)\\
$\geq2$&$\emptyset$&$\{k^2-2,k^2\}$&$\emptyset$&0&$(k\mathbb{N}_0+1)^2$&Thm.~\ref{thm:NP}(\ref{thm:NP:ii}) \& (T1)
\refstepcounter{rownumber}\\
$\geq2$&$\emptyset$&$\{2,4\}$&$\emptyset$&0&$(k\mathbb{N}_0+1)^2$&Thm.~\ref{thm:NP}(\ref{thm:NP:ii}) \& (T1) \& (T2)
\refstepcounter{rownumber}\\
\bottomrule\\[-.1ex]
\end{tabular}
\caption{$\mathbb{N}\mathbb{P}$-hard problems $\textsc{WRec}(k,V_\leq,V_\geq,V_=,t,L).$}\label{tableNP01}
\end{table}
}

Let us finally point out, that the Tables~\ref{tableP01} and~\ref{tableNP01} can be
largely extended. In particular, we can consider reconstruction problems for different types of windows which are even allowed to vary for each $(i,j)\in[m]\times[n]$. While some of the complexity results in this more general setting may seem rather marginal, other may be interesting for certain applications; see \cite{agdynamic}. As this would go far beyond the scope of the present paper we refrain, however, from carrying the results to the extremes here.

%\bibliographystyle{plain}
%\bibliography{../references}	
%\bibliography{references}	

\end{document}